\begin{document}
%
\title{Over-the-Air Federated Learning Over MIMO Channels: A Sparse-Coded Multiplexing Approach}
%
%
%

\author{Chenxi~Zhong,
        and~Xiaojun~Yuan,~\IEEEmembership{Senior~Member,~IEEE}
    \thanks{C. Zhong, and X. Yuan are with the National Key Laboratory of Science and Technology on Communications, University of Electronic Science and Technology of China, Chengdu, China (e-mail: cxzhong@std.uestc.edu.cn; xjyuan@uestc.edu.cn). The corresponding author is Xiaojun Yuan.}
}

\maketitle

\begin{abstract}
The communication bottleneck of over-the-air federated learning (OA-FL) lies in uploading the gradients of local learning models. 
In this paper, we study the reduction of the communication overhead in the gradients uploading by using the multiple-input multiple-output (MIMO) technique.
We propose a novel sparse-coded multiplexing (SCoM) approach that employs sparse-coding compression and MIMO multiplexing to balance the communication overhead and the learning performance of the FL model. 
We derive an upper bound on the learning performance loss of the SCoM-based MIMO OA-FL scheme by quantitatively characterizing the gradient aggregation error.
Based on the analysis results, we show that the optimal number of multiplexed data streams to minimize the upper bound on the FL learning performance loss is given by the minimum of the numbers of transmit and receive antennas. 
We then formulate an optimization problem for the design of precoding and post-processing matrices to minimize the gradient aggregation error. 
To solve this problem, we develop a low-complexity algorithm based on alternating optimization (AO) and alternating direction method of multipliers (ADMM), which effectively mitigates the impact of the gradient aggregation error. 
Numerical results demonstrate the superb performance of the proposed SCoM approach. 
\end{abstract}

\begin{IEEEkeywords}
Over-the-air federated learning, multiple-input multiple-output access channel, multiplexing, turbo compressed sensing.
\end{IEEEkeywords}

\section{Introduction}
Sixth-generation (6G) wireless communications, as expected to support the connection density up to millions of wireless devices per square kilometer, will provide a solid foundation to fulfil the vision of ubiquitous intelligence \cite{zhou2020service}.
To develop a powerful intelligence model, it is necessary to exploit the diversity of data distributed over a large number of edge devices. A straightforward paradigm is to require edge devices to send local data to a central parameter server (PS) for training the model centrally. 
However, sending raw data to the PS requires a huge communication overhead and may expose user privacy. To overcome these drawbacks, federated learning (FL) is a promising substitute that allows edge devices to collaborate on training a machine learning (ML) model without sharing their local data with others \cite{mcmahan2017communication}.
Instead of uploading raw data, in an FL training round, each edge device sends its local gradient to the PS, and the PS aggregates the local gradients, updates and sends back the global model to the devices.

Gradient uploading is a critical bottleneck of deploying FL on a wireless network, since it is difficult to support the communication demands of massive edge devices with limited communication resources (e.g. time, bandwidth, and space). 
For example, the dimension of the recent ML models is extremely large, e.g., the ResNet152 has 60 million parameters \cite{he2016deep}, while the GPT-3 has 175 billion parameters \cite{brown2020language}. Yet, the available channel bandwidth is typically small due to the bandwidth and latency limitations, e.g., 1 LTE frame of 5MHz bandwidth and 10ms duration can carry only $50,000$ complex symbols. 
Fortunately, in FL, the PS does not need to know the local gradient of each device but the aggregated gradient, usually the mean of all the local gradients. 
Based on this property, over-the-air FL (OA-FL) is proposed in \cite{nazer2007computation, zhu2020broadband, amiri2020federated}, where edge devices share the same wireless resources to upload their local gradients. Thanks to the analog superposition of electromagnetic waves, the local gradients are aggregated over-the-air in the process of uploading. 
Compared with the traditional orthogonal multiple access (OMA) approaches \cite{yang2021energy,vu2020cell,vu2022joint}, OA-FL does not require more communication resources as the number of devices increases \cite{amiri2020federated}, which greatly alleviates the communication bottleneck of gradient uploading. 
Pioneering studies demonstrate that OA-FL also exhibits strong noise tolerance and significant latency improvement \cite{zhu2020broadband, liu2020reconfigurable}.

Due to the appealing features of OA-FL, much research effort has been devoted to the design of efficient OA-FL systems. For example, ref.~\cite{lin2017deep} pointed out that the local gradients can be sparsified, compressed, and quantized to reduce the communication overhead without causing substantial losses in accuracy. Ref.~\cite{amiri2020federated} proposed an efficient scheme, where the local gradients are sparsified and linear coding compressed before uploaded, and the aggregated gradient is recovered at the PS via compressed sensing methods. The authors in \cite{ma2022over} used partial-orthogonal compressing matrices and turbo compressed sensing (Turbo-CS) \cite{ma2014turbo}, achieving a lower complexity scheme of sparse coding. It has been shown in \cite{ma2022over} that sparse coding enables the OA-FL system to achieve a lower communication overhead and a faster convergence rate. 

The above schemes are all based on single-input single-output (SISO) systems.
Multiple-input multiple-output (MIMO) with array signal processing has been widely recognized as a powerful technique to enhance the system capacity.
MIMO multiplexing significantly reduces the number of channel uses by transmitting multiple spatial data streams in parallel through antenna arrays\cite{spencer2004zero}. 
However, MIMO multiplexing causes the inter-stream interference which corrupts the aggregated gradient and the test accuracy for OA-FL.
There have been some preliminary attempts to alleviate the impact of inter-stream interference by designing the precoding matrices at the devices and the post-processing matrix at the PS. 
For instance, ref.~\cite{zhu2018mimo} set the precoding matrix to the pseudo-inverse of channel matrix, and derived a closed-form equalizer as the post-processing matrix using differential geometry. 
The scheme proposed in \cite{chen2018over} also uses the pseudo-inverse of channel matrix as the precoding matrix, and computes the post-processing matrix based on the receive antenna selection.
However, both methods are based on the channel inversion, which may significantly amplify noise and hence exacerbate the gradients aggregation error, especially when some devices suffer from deep channel fading \cite{zhu2020broadband, liu2020reconfigurable, zhong2022over}. 

In this paper, we consider an over-the-air federated learning (OA-FL) network where the local gradients are uploaded over a MIMO multiple access (MAC) channel. 
The MIMO MAC channel comprises a central PS with multiple antennas and several edge devices with multiple antennas.
We propose a novel Sparse-Coded Multiplexing (SCoM) approach that integrates sparse coding with MIMO multiplexing in gradients uploading. 
Benefiting from two techniques, the SCoM achieves a strikingly better balance between the communication overhead and the learning performance. 
On one hand, sparse-coding utilizes the sparsity of the gradient to compress the gradient, reducing the communication overhead. 
On the other hand, MIMO multiplexing reduces the number of channel uses by transmitting multiple streams in parallel, and suppresses the gradients aggregation error through precoding and post-processing matrices.
The main contributions are summarized as follows.
\begin{itemize}
    \item We propose a novel SCoM approach for gradients uploading in MIMO OA-FL. 
    We derive an upper bound on the learning performance loss of the SCoM-based MIMO OA-FL scheme by quantitatively characterizing the gradient aggregation error. 
    \item Based on the analytical result, we formulate a joint precoding and post-processing matrices optimization problem for suppressing the gradient aggregation error. 
    We design a low-complexity algorithm that employs alternating optimization (AO) and alternating direction method of multipliers (ADMM) to jointly optimize the precoding and post-processing matrices.
    \item We derive the optimal number of multiplexed data streams for SCoM to balance the communication overhead and the gradient aggregation performance.
    More specifically, the optimal number of multiplexed data streams to minimize the upper bound of the learning performance loss is given by $\min\{N_{\mathrm{T}}, N_{\mathrm{R}}\}$, where $N_{\mathrm{T}}$ denotes the number of transmit antennas, and $N_{\mathrm{R}}$ denotes the number of receive antennas. 
\end{itemize}
Numerical results demonstrate that our proposed SCoM approach achieves the same test accuracy with much lower communication overhead than other existing approaches, which indicates the superior performance of the SCoM approach.

The rest of this paper is structured as follows. 
Section II introduces the FL model and the MIMO MAC channel. 
Section III presents the proposed SCoM approach. 
The analysis of the learning performance of the SCoM approach is presented in Section IV. 
In Section V, we present the optimization problem to minimize the gradient aggregation error, and propose a low-complexity algorithm to jointly optimize precoding and post-processing matrices. Section VI presents numerical results to evaluate the SCoM approach and Section VII concludes the paper. 

\textit{Notation}: 
$\mathbb{R}$ and $\mathbb{C}$ denote the sets of real and complex numbers, respectively. $\tr(\cdot)$, $\rank(\cdot)$, $(\cdot)^\dagger$, $(\cdot)^\mathrm{T}$, and $(\cdot)^\mathrm{H}$ are used to denote the trace, the rank, the conjugate, the transpose, and the conjugate transpose of the matrix, respectively. 
$[M]$ denotes the set $\{m | 1 \leq m \leq M\}$.
$\mathbf{s}(1:N)$ denotes a sub-vector of $\mathbf{s}$ that contains entries from index $1$ to index $N$.
The expectation operator is denoted by $\mathbb{E}[\cdot]$. 
We use $\mathbf{I}_N$, and $\mathbf{0}_{N\times M}$ to denote the identity matrix of size $N\times N$ and the zero matrix of size $N\times M$, respectively.
We use $\|\cdot\|_2$, and $\|\cdot\|_\mathrm{F}$ to denote the $l_2$-norm and the Frobenius norm, respectively. 
$\mathcal{CN}(\mu, \sigma)$ denotes the circularly-symmetric complex Gaussian (CSCG) distribution that has a mean of $\mu$ and a covariance of $\sigma$. 

\section{System Model}

\subsection{Federated Learning}
We start with the description of the FL task deployed on a wireless communication system, where the system consists of one central PS and $M$ edge devices. 
We assume that the training data of the FL task are all distributedly stored on the edge devices.
Let $\mathcal{A}_{m}$ denote the local dataset of the $m$-th device, and $Q_{m}$ denote the cardinality of $\mathcal{A}_{m}$. $Q = \sum_{m=1}^{M} Q_{m}$ is the total number of training data samples for the FL task. $\boldsymbol{\theta} \in \mathbb{R}^D$ is the model parameter vector with $D$ being the total length of the model parameter.
The target of the FL task is to minimize an empirical loss function $F(\boldsymbol{\theta})$ based on the local datasets $\{\mathcal{A}_{m}\}$, given by
\begin{equation}
    \min_{\boldsymbol{\theta}} \quad F(\boldsymbol{\theta}) = \frac{1}{Q} \sum\nolimits_{m=1}^{M} Q_{m} F_{m}(\boldsymbol{\theta}) = \frac{1}{Q} \sum\nolimits_{m=1}^{M} \sum\nolimits_{n=1}^{ Q_{m} } f\left(\boldsymbol{\theta}; \boldsymbol{\zeta}_{m,n} \right),
    \label{eq:FLTarget}
\end{equation}
where $F_{m}(\boldsymbol{\theta}) = \frac{1}{Q_{m}} \sum_{n=1}^{ Q_{m} } f\!(\boldsymbol{\theta}; \boldsymbol{\zeta}_{m,n})$ is the local loss function of device $m$, and $f\!(\boldsymbol{\theta}; \boldsymbol{\zeta}_{m,n})$ is the sample-wise loss function for the $n$-th training sample $\boldsymbol{\zeta}_{m,n}$ in $\mathcal{A}_{m}$. 

To minimize the empirical loss function $F(\boldsymbol{\theta})$ in \eqref{eq:FLTarget}, the FL training involves $T$ communication rounds between the edge devices and the PS for $F(\boldsymbol{\theta})$ to reach convergence. Specifically, each communication round $t \in [T]$ consists of four steps:
\begin{itemize}
    \item \textit{Global model download}:
    The PS sends the \textit{global model} $\boldsymbol{\theta}^{(t)}$ to each edge device.
    \item \textit{Local gradients computation}: The \textit{local gradient} $\mathbf{g}_{m}^{(t)}\in\mathbb{R}^D$ is computed by device $m$ based on their own data and the global model, given by
    \begin{equation}
        \mathbf{g}_{m}^{(t)} = \nabla F_{m}(\boldsymbol{\theta}^{(t)}) = \frac{1}{Q_{m}} \sum\nolimits_{n=1}^{ Q_{m}} \nabla f\left(\boldsymbol{\theta}^{(t)}; \boldsymbol{\zeta}_{m,n} \right).
        \label{eq:GradDef}
    \end{equation}
    \item \textit{Local gradients upload}: The edge devices send the local gradients to the PS through wireless channels.
    \item \textit{Global model aggregation}: The local gradients are aggregated as
    \begin{equation}
        \mathbf{g}^{(t)} = \frac{1}{Q} \sum\nolimits_{m = 1}^M Q_{m} \mathbf{g}_{m}^{(t)} = \sum\nolimits_{m = 1}^M q_{m} \mathbf{g}_{m}^{(t)},
        \label{eq:IdealGradient}
    \end{equation}
    where $\mathbf{g}^{(t)}$ denotes the \textit{aggregated gradient}, and $q_{m} = Q_{m}/Q, \forall m \in [M]$. The global model $\boldsymbol{\theta}^{(t+1)}$ is updated by 
    \begin{equation}
        \boldsymbol{\theta}^{(t+1)} = \boldsymbol{\theta}^{(t)}- \eta \mathbf{g}^{(t)},
        \label{eq:SGD}
    \end{equation}
    where $\eta \in \mathbb{R}$ denotes the learning rate.
\end{itemize}

\subsection{MIMO Channel Model}
    \begin{figure}[htbp]
    \centering
    \includegraphics[width=0.6\linewidth]{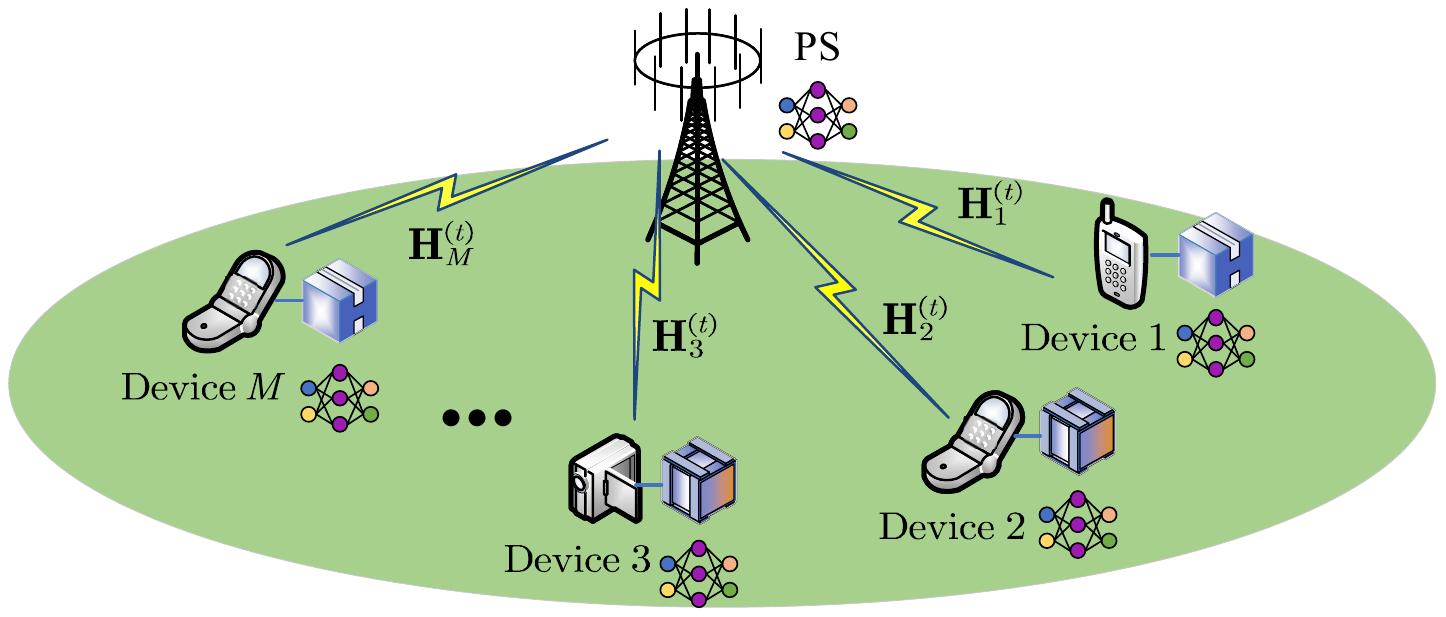}
    \caption{An illustration of the considered MIMO OA-FL system.}
    \label{fig:system_model}
\end{figure}
We now introduce the wireless multi-user multiple-input multiple-output (MIMO) channel for the above FL system. 
As depicted in Fig.~\ref{fig:system_model}, the considered MIMO OA-FL system consists of a PS with $N_{\mathrm{R}}$ antennas, and $M$ edge devices with each equipped with $N_{\mathrm{T}}$ antennas. 
As in previous studies in OA-FL \cite{amiri2020federated,sery2021over,cao2022transmission}, we make two assumptions: that the download of the global model is through error-free links\footnote{
In practice, the channel noise causes communication errors in the model download. This issue can be addressed by the schemes proposed by \cite{vu2020cell,vu2022joint}. This is beyond the scope of this paper and hence omitted here. }
and that the devices upload the local gradients to the PS synchronously\footnote{
The existing techniques in 4G Long Term Evolution, e.g., the timing advance (TA) mechanism, can achieve the synchronization of the gradient symbols among the edge devices \cite{3gpp.38.213}. }.
We now focus on the process of local gradients uploading. 
We consider a block-fading channel, where the channel state information (CSI) remains constant during the gradients uploading. 
Let $\mathbf{H}_{m}^{(t)} \in \mathbb{C}^{N_{\mathrm{R}} \times N_{\mathrm{T}}}$ denote the CSI matrix between the $m$-th device and the PS at the $t$-th round. We assume that the PS has perfect knowledge of the CSI of the wireless channels between the devices and the PS\footnote{
The approaches of CSI estimation over MIMO MAC channels can be referred to \cite{nguyen2013compressive,wen2014channel,vu2020cell,vu2022joint}.}. 
Thus, at the PS, the receive signal matrix from the above MIMO multiple access (MAC) channel is given by
\begin{equation}
    \mathbf{Y}^{(t)}
    = \sum\nolimits_{m=1}^M \mathbf{H}_{m}^{(t)} \mathbf{X}_{m}^{(t)} + \mathbf{N}^{(t)} \in \mathbb{C}^{N_{\mathrm{R}}\times K},
    \label{eq:receive_data}
\end{equation}
where $K$ denotes the number of channel uses at the $t$-th round; 
$\mathbf{X}_{m}^{(t)} \in \mathbb{C}^{N_{\mathrm{T}} \times K}$ denotes the transmit data matrix for the $m$-th device; 
and $\mathbf{N}^{(t)} \in \mathbb{C}^{N_{\mathrm{R}} \times K}$ is an additive white Gaussian noise (AWGN) matrix, with the entries independently drawn from $\mathcal{CN}(0,\sigma_{\mathrm{noise}})$.
Let $\mathbf{x}_{m,k}$ denote the $k$-th column of $\mathbf{X}_{m}^{(t)}$. Here, we consider the following transmit power constraint:
\begin{equation}
    \mathbb{E}[\|\mathbf{x}_{m,k}^{(t)}\|_2^2] \leq P_0, \forall k \in [K],
    \label{eq:trans_power_constr1}
\end{equation}
where $P_0$ is the transmit power budget.

What remains are to map the local gradients $\{\mathbf{g}_{m}^{(t)}\}$ to the transmit matrices $\{\mathbf{X}_{m}^{(t)}\}$ at the edge devices, and to recover the aggregated gradient from the receive signal matrix $\mathbf{Y}^{(t)}$. 
These issues are discussed in detail in the next section.

\section{Proposed Sparse-Coded Multiplexing Approach}
With the development of deep learning, the size of model is increasing. To upload the large number of FL local gradients over the aforementioned MIMO channel, the key challenge is the heavy communication burden.
Although transmitting the data streams in parallel with MIMO multiplexing efficiently reduces the communication overhead, MIMO multiplexing also causes the interference between the data streams, resulting in a loss of FL learning performance.

To address these challenges, we propose a novel transmission scheme, i.e., the Sparse-Coded Multiplexing (SCoM) approach, for the above MIMO OA-FL system, as shown in Fig.~\ref{fig:flow_figure}. 
The SCoM approach employs two techniques: sparse-coding and MIMO multiplexing. 
On one hand, sparse-coding utilizes the sparsity of the gradient to compress the gradient, reducing the communication overhead. Meanwhile, sparse-coding leverages the compression matrix to encode the data streams, thereby reducing the correlation between data streams and suppressing inter-stream interference.
On the other hand, MIMO multiplexing reduces the number of channel uses by transmitting multi-stream data through antenna arrays, and suppresses inter-stream interference through precoding and post-processing matrices.
The details of the SCoM approach are given as follows.
\begin{figure}[htbp]
\centering
\includegraphics[width=\linewidth]{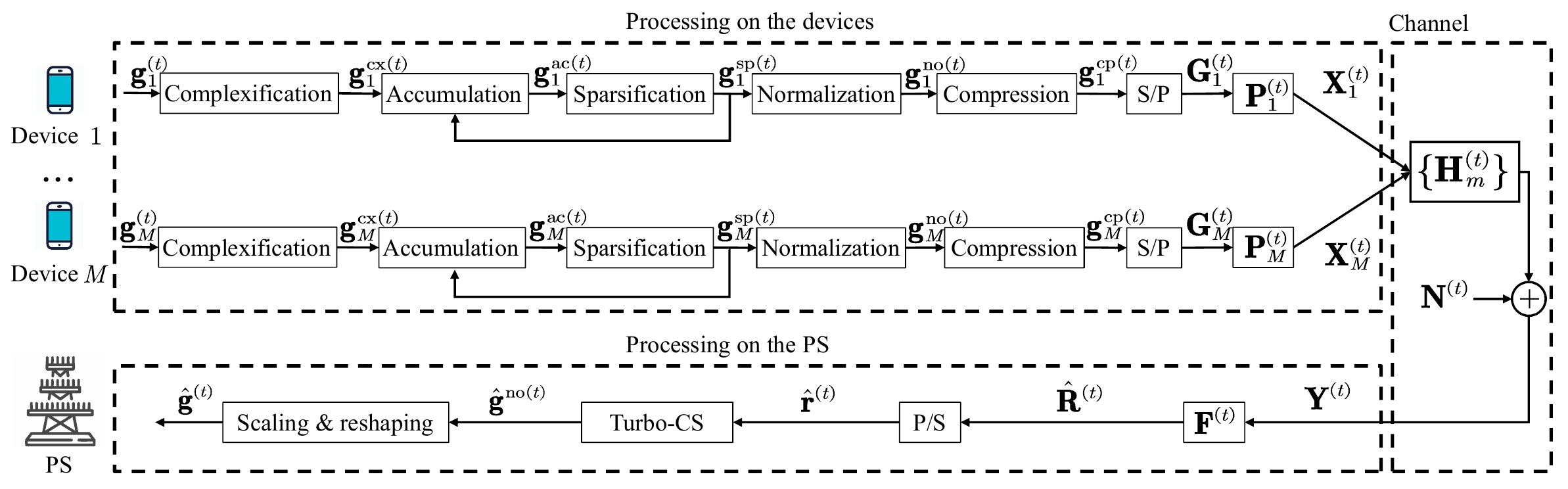}
\caption{A flow diagram of local gradients uploading at round $t$.}
\label{fig:flow_figure}
\end{figure}

\subsection{Processing on Devices}
To support the local gradients uploading in the MIMO OA-FL system, the pre-processing operations are first conducted on the edge devices, including the gradient sparsification \cite{amiri2020federated} and the gradient compression\cite{ma2014turbo}. 

To be specific, for the $m$-th edge device, the local gradient $\mathbf{g}_{m}^{(t)}$ at the $t$-th round is first complexified to fully utilize the spectral efficiency of complex channels. The complexified gradient is denoted by 
\begin{equation}
    \mathbf{g}_{m}^{\mathrm{cx}(t)} = \Re\{\mathbf{g}_{m}^{\mathrm{cx}(t)}\} + j \Im\{\mathbf{g}_{m}^{\mathrm{cx}(t)}\} \in \mathbb{C}^{D/2},
    \label{eq:Complexification}
\end{equation}
where $j = \sqrt{-1}$.
Then, the accumulated gradient is obtained by
\begin{equation}
    \mathbf{g}_{m}^{\mathrm{ac}(t)} = \mathbf{g}_{m}^{\mathrm{cx}(t)} + \boldsymbol{\Delta}_{m}^{(t)}  \in \mathbb{C}^{D/2},
    \label{eq:Accumulation}
\end{equation}
where $\boldsymbol{\Delta}_{m}^{(t)}$ denotes the error accumulation vector of the $m$-th device at the $t$-th round with $\Delta_{m}^{(0)}$ initialized as $\mathbf{0}$.
Having calculated the accumulated gradient, in the sparsification, the $m$-th device obtains the sparsified gradient via 
\begin{equation}
    \mathbf{g}_{m}^{\mathrm{sp}(t)} = \mathrm{sp}(\mathbf{g}_{m}^{\mathrm{ac}(t)}, \lambda) \in \mathbb{C}^{D/2},
    \label{eq:g_loc_sparsity}
\end{equation}
where $\lambda \in [0,1]$ denotes the sparsity ratio. The operator $\mathrm{sp}(\cdot)$ retains the $\lambda D / 2$ entries of $\mathbf{g}_{m}^{\mathrm{ac}(t)}$ with the largest absolute value magnitude, and sets the remaining $(1-\lambda) D / 2$ entries to $0$.
The error accumulation vector $\boldsymbol{\Delta}_{m}^{(t+1)}$ is updated via
\begin{equation}
   \boldsymbol{\Delta}_{m}^{(t+1)} = \mathbf{g}_{m}^{\mathrm{ac}(t)} - \mathbf{g}_{m}^{\mathrm{sp}(t)}.
   \label{eq:Delta_update}
\end{equation}
Then $\mathbf{g}_{m}^{\mathrm{sp}(t)}$ is normalized to $\mathbf{g}_{m}^{\mathrm{no}(t)}$ by \begin{equation}
    \mathbf{g}_{m}^{\mathrm{no}(t)} = (\mathbf{g}_{m}^{\mathrm{sp}(t)} \odot \mathbf{s}) / \sqrt{\sigma_{m}^{(t)}} \in \mathbb{C}^{D/2},
    \label{eq:g_normalize}
\end{equation}
where $\mathbf{s} \in \mathbb{C}^{D/2}$ is a random flipping vector, with each entry of $\mathbf{s}$ being independent and identically distributed (i.i.d.) drawn from $\{-1,+1\}$; and $\sigma_{m}^{(t)} = \frac{1}{D/2} \sum_{d=1}^{D/2} |g_{m}^{\mathrm{sp}(t)}[d]|^2 $ denotes the variance of $\{g_{m}^{\mathrm{sp}(t)}[d]\}_{d=1}^{D/2}$, with $g_{m}^{\mathrm{sp}(t)}[d]$ being the $d$-th entry of $\mathbf{g}_{m}^{\mathrm{sp}(t)}$. 
From \eqref{eq:g_normalize}, the entries of $\mathbf{g}_{m}^{\mathrm{no}(t)}$ have zero-mean and unit variance. 

Each device $m$ then compresses $\mathbf{g}_{m}^{\mathrm{no}(t)}$ into a low-dimensional vector via a common compressing matrix. Specifically, the compressed gradient is given by
\begin{equation}
    \mathbf{g}_{m}^{\mathrm{cp}(t)} = \mathbf{A} \mathbf{g}_{m}^{\mathrm{no}(t)} \in \mathbb{C}^{C},
    \label{eq:g_compress}
\end{equation}
where $\mathbf{A} \in \mathbb{C}^{C \times (D/2) }$ denotes the common compressing matrix\footnote{
We assume that both the compression matrix $\mathbf{A}$ and the flipping vector $\mathbf{s}$ keep invariant throughout the FL training process, and are shared among the devices prior to the FL training.}, 
$C$ denotes the length of the compressed gradient, and $\kappa = \frac{C}{D/2}$ denotes the compression ratio. 
Inspired by \cite{ma2014turbo, ma2022over}, we employ a partial DFT matrix as the compressing matrix, given by $\mathbf{A} = \mathbf{S} \boldsymbol{\Xi}$. $\mathbf{S} \in \mathbb{R}^{C \times D/2}$ is a selection matrix consisting of $C$ randomly selected and reordered rows of the $ D/2 \times  D/2$ identity matrix $\mathbf{I}_{D/2}$; and $\boldsymbol{\Xi} \in \mathbb{R}^{D/2 \times D/2}$ is a unitary DFT matrix, where the $(d,d^\prime)$-th entry of $\boldsymbol{\Xi}$ is given by $\frac{1}{\sqrt{D/2}} \exp\left(-2\pi j \frac{(d - 1)(d^{\prime} - 1)}{D/2}\right)$.
We note that the partial DFT matrix has lower computational complexity and better performance, compared with other types of compressed matrix such as i.i.d. Gaussian matrix \cite{ma2015performance}.

To transmit the data with multiple streams, device $m$ then reshapes the compressed gradient $\mathbf{g}_{m}^{\mathrm{cp}(t)}$ into $\mathbf{G}_{m}^{(t)} \in \mathbb{C}^{N_{\mathrm{S}}\times K}$ as
\begin{equation}
    \mathbf{G}_{m}^{(t)} 
    = [ \mathbf{g}_{m,1}^{\mathrm{cp}(t)}, \cdots, 
    \mathbf{g}_{m,N_{\mathrm{S}}}^{\mathrm{cp}(t)}]^T, 
    \label{eq:G_def}
\end{equation}
where $N_{\mathrm{S}}$ denotes the number of data streams, $\mathbf{g}_{m,n}^{\mathrm{cp}(t)} = \mathbf{g}_{m}^{\mathrm{cp}(t)}((n-1)K+1: nK), n \in [N_{\mathrm{S}}] $.
Naturally, the number of channel uses satisfies the following equation: 
\begin{equation}
    K = \frac{C}{N_{\mathrm{S}}} = \frac{\kappa D/2}{N_{\mathrm{S}}}.
    \label{eq:channel_uses}
\end{equation}

We are now ready to describe the design of the transmit matrix $\mathbf{X}_{m}^{(t)}$. 
The transmit matrix $\mathbf{X}_{m}^{(t)}$ is given by 
$\mathbf{X}_{m}^{(t)} = \mathbf{P}_{m}^{(t)}  \mathbf{G}_{m}^{(t)}$, 
where $\mathbf{P}_{m}^{(t)} \in \mathbb{C}^{N_{\mathrm{T}} \times N_{\mathrm{S}}}$ denotes the precoding matrix for the $m$-th device.
Let $\mathbf{g}_{m,k}^{(t)} \in \mathbb{C}^{N_{\mathrm{S}}} $ denote the $k$-th column of $\mathbf{G}_{m}^{(t)}$. We have $\mathbf{x}_{m,k}^{(t)} = \mathbf{P}_{m}^{(t)} \mathbf{g}_{m,k}^{(t)}, \forall k \in [K]$. 
From the transmit power constraint in \eqref{eq:trans_power_constr1}, we have 
\begin{equation}
    \mathbb{E}[\|\mathbf{P}_{m}^{(t)} \mathbf{g}_{m,k}^{(t)}\|_2^2] 
    \overset{(a)}{=} \|\mathbf{P}_{m}^{(t)} \|_\mathrm{F}^2 \leq P_0, \forall k \in [K],
\end{equation}
where the step (a) is due to the normalization in \eqref{eq:g_normalize}. 

\subsection{Processing on the PS}
We now describe the processing operations on the PS. 
In the following, we first transfer the recovery of the aggregated gradient to a compressed sensing problem, and then adopt the Turbo compressed sensing (Turbo-CS) algorithm \cite{ma2014turbo} to solve this problem.

At the PS, the receive signal matrix $\mathbf{Y}$ in \eqref{eq:receive_data} is first processed through the post processing matrix $\mathbf{F}^{(t)}$, and the post-processed matrix $\hat{\mathbf{R}}^{(t)}$ is given by
\begin{equation}
    \hat{\mathbf{R}}^{(t)} = \mathbf{F}^{(t)} \left( \sum\nolimits_{m=1}^M \mathbf{H}_{m}^{(t)} \mathbf{P}_{m}^{(t)} \mathbf{G}_{m}^{(t)} + \mathbf{N}^{(t)} \right) \in \mathbb{C}^{N_{\mathrm{S}}\times K}.
    \label{eq:post_processed_matrix}
\end{equation}
$\hat{\mathbf{R}}$ is an approximation to the compressed gradient aggregation matrix $\mathbf{R}^{(t)} = \sum\nolimits_{m=1}^M q_{m} \mathbf{G}_{m}$, and the residual error is given by
\begin{equation}
    \mathbf{W}^{(t)} = \hat{\mathbf{R}}^{(t)} - \mathbf{R}^{(t)}.
    \label{eq:error_receive_data}
\end{equation}
Based on \eqref{eq:error_receive_data}, the PS converts the post-processed matrix $\hat{\mathbf{R}}$ into an equivalent vector form:
\begin{equation}
    \hat{\mathbf{r}}^{(t)} = \vectorize(\hat{\mathbf{R}}^{(t)}{}^{\mathrm{T}}) 
    = \sum\nolimits_{m=1}^M q_{m} \vectorize \left( \mathbf{G}_{m}^{(t)\mathrm{T}} \right) + \vectorize \left( \mathbf{W}^{(t)}{}^{\mathrm{T}} \right)
    \overset{(a)}{=} \mathbf{A} \mathbf{g}^{\mathrm{no}(t)} + \mathbf{w}^{(t)} \in \mathbb{C}^{C},
    \label{eq:compressed_problem}
\end{equation}
where step (a) is from \eqref{eq:g_compress} and \eqref{eq:G_def}, together with the definition of $\mathbf{g}^{\mathrm{no}(t)} = \sum_{m=1}^M q_{m} \mathbf{g}_{m}^{\mathrm{no}(t)}$ and $\mathbf{w}^{(t)} = \vectorize \left( \mathbf{W}^{(t)}{}^{\mathrm{T}} \right)$.
We assume that the entries of $\mathbf{g}^{\mathrm{no}(t)}$ are independently drawn from a Bernoulli Gaussian distribution:
\begin{equation}
    g_d^{\mathrm{no}(t)} \sim 
    \begin{cases}
    0,                    & \text{probability} = 1 - \lambda^{\prime(t)},\\
    \mathcal{CN}(0, \sigma_g^{(t)}),& \text{probability} = \lambda^{\prime(t)},
    \end{cases}, \forall d \in [D/2] 
    \label{eq:g_agg_sparsity}
\end{equation}
where $g_d^{\mathrm{no}(t)}$ is the $d$-th entry of $\mathbf{g}^{\mathrm{no}(t)}$, $\lambda^{\prime(t)}$ is the sparsity ratio of $\mathbf{g}^{\mathrm{no}(t)}$, which can be estimated by the Expectation-Maximization algorithm \cite{vila2013expectation}, and $\sigma_g^{(t)}$ is the variance of the nonzero entries in $\mathbf{g}^{\mathrm{no}(t)}$. 
Moreover, we assume that the entries of $\mathbf{w}^{(t)}$ are i.i.d. drawn from $\mathcal{CN}(0, \sigma_{w}^{(t)})$, where $\sigma_{w}^{(t)}$ represents the mean square error (MSE) of the compressed gradient aggregation, given by
\begin{equation}
    \sigma_{w}^{(t)} = \frac{1}{N_{\mathrm{S}}K}\mathbb{E} \|\hat{\mathbf{R}}^{(t)} - \mathbf{R}^{(t)} \|_\mathrm{F}^2 .
    \label{eq:error_mimo}
\end{equation}

The recovery of $\mathbf{g}^{\mathrm{no}(t)}$ from $\hat{\mathbf{r}}^{(t)}$ in \eqref{eq:compressed_problem} is a compressed sensing problem, where the local gradients $\{\mathbf{g}_{m}^{(t)}\}$ are compressed by the partial DFT matrix $\mathbf{A}$ on the edge devices.
From \cite{ma2014turbo, ma2022over}, we see that the Turbo-CS algorithm is the state-of-the-art to solve the compressed sensing problem with partial-orthogonal sensing matrices. Thus, we employ the Turbo-CS algorithm to recover $\mathbf{g}^{\mathrm{no}(t)}$:
\begin{equation}
    \hat{\mathbf{g}}^{\mathrm{no}(t)} = \text{Turbo-CS}(\hat{\mathbf{r}}^{(t)}) \in \mathbb{C}^{C/2}.
\end{equation}
The details of the Turbo-CS algorithm are presented in the next subsection.
Meanwhile, the performance of the Turbo-CS algorithm to recover the aggregated gradient is related to the variance of the noise $\mathbf{w}^{(t)}$, i.e., $\sigma_{w}^{(t)}$. A smaller $\sigma_{w}^{(t)}$ leads to a better recovery performance and a less loss of learning accuracy, which is discussed in Section IV.

\subsection{Turbo-CS Algorithm}
As shown in Fig.~\ref{fig:Turbo-CS}, Turbo-CS conducts the iteration between modules A and B until convergence. Module A is a linear minimum mean-squared error (LMMSE) estimator handling the linear constraint in \eqref{eq:compressed_problem}, and module B is a minimum mean-squared error (MMSE) denoiser exploiting the sparsity in \eqref{eq:g_agg_sparsity}. 
We next give the operations of Turbo-CS in detail. 

\begin{figure}[ht]
    \centering
    \includegraphics[width = 0.7\linewidth]{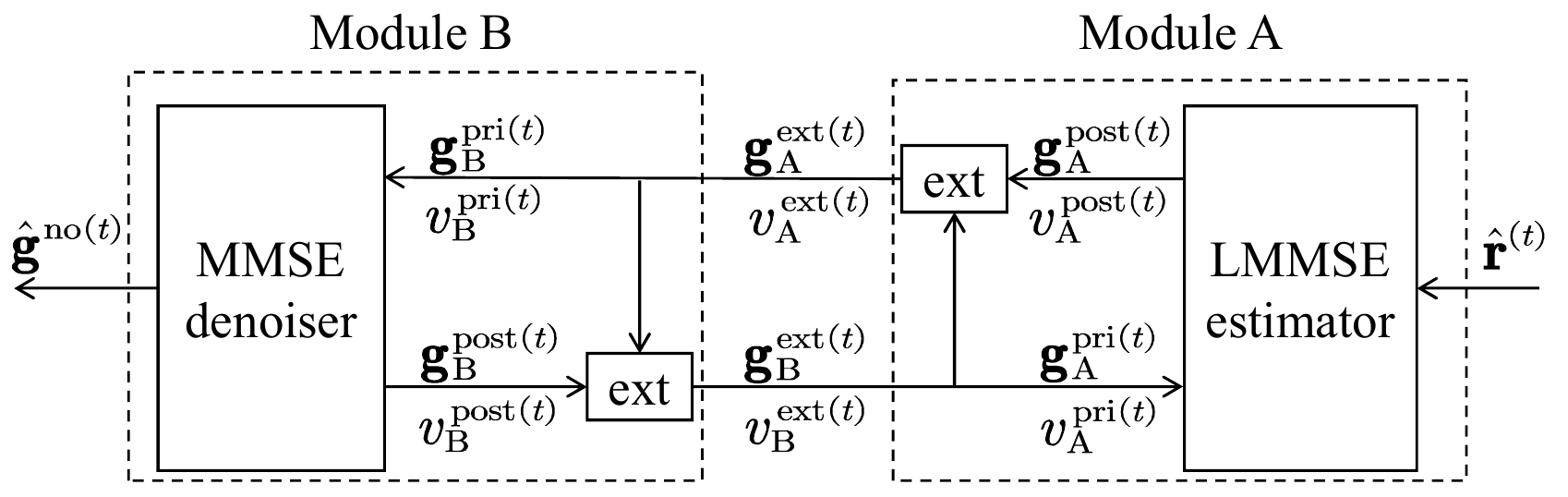}
    \caption{An illustration of the Turbo-CS algorithm.}
    \label{fig:Turbo-CS}
\end{figure}

The iterative process begins with module A. The inputs of the LMMSE estimator in module A are the \textit{a priori} mean $\mathbf{g}_{\mathrm{A}}^{\mathrm{pri}(t)} \in \mathbb{C}^{D/2}$, the \textit{a priori} covariance $v_{\mathrm{A}}^{\mathrm{pri}(t)}$, and the observed vector $\hat{\mathbf{r}}^{(t)}$. 
With given $\mathbf{g}_{\mathrm{A}}^{\mathrm{pri}(t)}$, $v_{\mathrm{A}}^{\mathrm{pri}(t)}$, and $\hat{\mathbf{r}}^{(t)}$, the \textit{a posteriori} mean $\mathbf{g}_{\mathrm{A}}^{\mathrm{post}(t)}$ and the \textit{a posteriori} covariance $v_{\mathrm{A}}^{\mathrm{post}(t)}$ of the LMMSE estimator are given by \cite{kay1993fundamentals}
\begin{subequations}
    \label{eq:g_v_post_A}
    \begin{align}
        \mathbf{g}_{\mathrm{A}}^{\mathrm{post}(t)}
        & = \mathbf{g}_{\mathrm{A}}^{\mathrm{pri}(t)}
        + \frac{v_{\mathrm{A}}^{\mathrm{pri}(t)}}{ v_{\mathrm{A}}^{\mathrm{pri}(t)}  + \sigma_{w}^{(t)} } \mathbf{A}^{\mathrm{H}}
        (\hat{\mathbf{r}}^{(t)} - \mathbf{A} \mathbf{g}_{\mathrm{A}}^{\mathrm{pri}(t)}), \\
        v_{\mathrm{A}}^{\mathrm{post}(t)} 
        & = v_{\mathrm{A}}^{\mathrm{pri}(t)} - \kappa \cdot  \frac{( v_{\mathrm{A}}^{\mathrm{pri}(t)})^2 }{ v_{\mathrm{A}}^{\mathrm{pri}(t)}  + \sigma_{w}^{(t)} },
        \label{eq:v_post_A}
    \end{align}
\end{subequations}
where $\kappa$ is the compression ratio defined below \eqref{eq:g_compress}.
Then the extrinsic mean and variance of the LMMSE estimator are given by 
\begin{subequations}
    \begin{align}
        \mathbf{g}_{\mathrm{A}}^{\mathrm{ext}(t)} &= v_{\mathrm{A}}^{\mathrm{ext}(t)} \left ( \frac{\mathbf{g}_{\mathrm{A}}^{\mathrm{post}(t)}}{v_{\mathrm{A}}^{\mathrm{post}(t)}}-\frac{\mathbf{g}_{\mathrm{A}}^{\mathrm{pri}(t)}}{v_{\mathrm{A}}^{\mathrm{pri}(t)}} \right ),\\
        v_{\mathrm{A}}^{\mathrm{ext}(t)} &= \left(1/{v_{\mathrm{A}}^{\mathrm{post}(t)}}-1/{v_{\mathrm{A}}^{\mathrm{pri}(t)}} \right){}^{-1}.
        \label{eq:v_ext_A}
    \end{align}
\end{subequations}
The extrinsic messages $\{\mathbf{g}_{\mathrm{A}}^{\mathrm{ext}(t)}, v_{\mathrm{A}}^{\mathrm{ext}(t)}\}$ are used to update the \textit{a priori} mean $\mathbf{g}_{\mathrm{B}}^{\mathrm{pri}(t)}$ and the \textit{a priori} variance $v_{\mathrm{B}}^{\mathrm{pri}(t)}$ as $\mathbf{g}_{\mathrm{B}}^{\mathrm{pri}(t)} = \mathbf{g}_{\mathrm{A}}^{\mathrm{ext}(t)}$, and $v_{\mathrm{B}}^{\mathrm{pri}(t)} = v_{\mathrm{A}}^{\mathrm{ext}(t)}$. 
Both $\{\mathbf{g}_{\mathrm{B}}^{\mathrm{pri}(t)}, v_{\mathrm{B}}^{\mathrm{pri}(t)}\}$ are the inputs of the MMSE denoiser in module B.

In module B, we model the \textit{a priori} mean $\mathbf{g}_{\mathrm{B}}^{\mathrm{pri}(t)}$ as an observation of $\mathbf{g}^{\mathrm{no}(t)}$ corrupted by additive noise: $\mathbf{g}_{\mathrm{B}}^{\mathrm{pri}(t)}\!=\!\mathbf{g}^{\mathrm{no}(t)}\!+\!\boldsymbol{\delta}^{(t)}$, where $\delta_d^{(t)} \sim \mathcal{CN}(0, v_{\mathrm{B}}^{\mathrm{pri}(t)} )$ denotes the $d$-th entry of $\boldsymbol{\delta}^{(t)}$. The \textit{a posteriori} mean $\mathbf{g}_{\mathrm{B}}^{\mathrm{post}(t)}$ and the variance $v_{\mathrm{B}}^{\mathrm{post}(t)}$ of the MMSE denoiser are given by
\begin{subequations}
    \label{eq:g_v_post_B}
    \begin{align}
        \mathbf{g}_{\mathrm{B}}^{\mathrm{post}(t)} &= \mathbb{E}\left[\mathbf{g}^{\mathrm{no}(t)}|\mathbf{g}_{\mathrm{B}}^{\mathrm{pri}(t)}\right], \\
        v_{\mathrm{B}}^{\mathrm{post}(t)} &= \frac{1}{D/2}\sum\nolimits_{d=1}^{D/2} \operatorname{var}\left[g^{\mathrm{no}(t)}_{d}|g_{\mathrm{B},d}^{\mathrm{pri}(t)}\right], 
        \label{eq:v_post_B}
    \end{align}
\end{subequations}
where $\operatorname{var}[a|b] = \mathbb{E}\Big[\big|a - \mathbb{E}[a|b]\big|^2 \Big| b\Big]$, and $g_{\mathrm{B},d}^{\mathrm{pri}(t)}$ is the $d$-th entry of $\mathbf{g}_{\mathrm{B}}^{\mathrm{pri}(t)}$. 
Then the extrinsic messages of the MMSE denoiser are given by
\begin{subequations}
    \label{eq:g_v_ext_B}
    \begin{align}
        &\mathbf{g}_{\mathrm{B}}^{\mathrm{ext}(t)} = v_{\mathrm{B}}^{\mathrm{ext}(t)} 
        \left(\frac{\mathbf{g}_{\mathrm{B}}^{\mathrm{post}(t)}}{v_{\mathrm{B}}^{\mathrm{post}(t)}} - \frac{\mathbf{g}_{\mathrm{B}}^{\mathrm{pri}(t)}}{v_{\mathrm{B}}^{\mathrm{pri}(t)}} \right), \\
        & v_{\mathrm{B}}^{\mathrm{ext}(t)} = \left( 1/{v_{\mathrm{B}}^{\mathrm{post}(t)}} - 1/{v_{\mathrm{B}}^{\mathrm{pri}(t)}} \right){}^{-1}.
        \label{eq:v_ext_B}
    \end{align}
\end{subequations}
The extrinsic messages $\{\mathbf{g}_{\mathrm{B}}^{\mathrm{ext}(t)}, v_{\mathrm{B}}^{\mathrm{ext}(t)}\}$ are used to update the \textit{a priori} mean $\mathbf{g}_{\mathrm{A}}^{\mathrm{pri}(t)}$ and the \textit{a priori} variance $v_{\mathrm{A}}^{\mathrm{pri}(t)}$ as $\mathbf{g}_{\mathrm{A}}^{\mathrm{pri}(t)} = \mathbf{g}_{\mathrm{B}}^{\mathrm{ext}(t)}$, and $v_{\mathrm{A}}^{\mathrm{pri}(t)}=v_{\mathrm{B}}^{\mathrm{ext}(t)}$.
Both $\{\mathbf{g}_{\mathrm{A}}^{\mathrm{pri}(t)}, v_{\mathrm{A}}^{\mathrm{pri}(t)}\}$ are the inputs of the LMMSE estimator in module A.
At the end of the iterative process, the final estimate is based on the \textit{a posteriori} output of the module B, i.e., $\hat{\mathbf{g}}^{\mathrm{no}(t)} = \mathbf{g}_{\mathrm{B}}^{\mathrm{post}(t)}$.

Then, based on the output of Turbo-CS, the aggregated gradient $\hat{\mathbf{g}}^{(t)}$ is given by
\begin{equation}
    \hat{\mathbf{g}}^{(t)} = [\Re \{\sqrt{\sigma^{(t)}} (\hat{\mathbf{g}}^{\mathrm{no}(t)} \odot \mathbf{s}) \},  \Im \{\sqrt{\sigma^{(t)}} (\hat{\mathbf{g}}^{\mathrm{no}(t)} \odot \mathbf{s}) \}] \in \mathbb{R}^D
    \label{eq:g_scaling}
\end{equation}
where $\sigma^{(t)} = \frac{1}{M}\sum_{m=1}^M \sigma_{m}^{(t)}$. The PS then updates the model $\boldsymbol{\theta}^{(t+1)}$ by 
\begin{equation}
    \boldsymbol{\theta}^{(t+1)} = \boldsymbol{\theta}^{(t)} - \eta \hat{\mathbf{g}}^{(t)}.
    \label{eq:SGD_error}
\end{equation} 

To sum up, at each round $t$, the receive matrix $\hat{\mathbf{R}}^{(t)}$ is first reshaped into a vector form in \eqref{eq:compressed_problem}. Given the observed vector $\hat{\mathbf{r}}^{(t)}$ and the initial values $\{\mathbf{g}_{\mathrm{A}}^{\mathrm{pri}(t)} = \mathbf{0}, v_{\mathrm{A}}^{\mathrm{pri}(t)} = 1\}$, Turbo-CS iteratively calculates \eqref{eq:g_v_post_A}-\eqref{eq:g_v_ext_B} until a certain termination criterion is met. Finally, the output $\hat{\mathbf{g}}^{\mathrm{no}(t)}$ is scaled as $\hat{\mathbf{g}}^{(t)}$ for the model update in \eqref{eq:SGD_error}. 

\subsection{Overall Scheme}
The proposed SCoM approach to the local gradients uploading is summarized below, where lines 3-5, 11-17 are executed at the PS, and lines 6-10 are executed at the devices.   

\begin{algorithm}[htb]
\renewcommand{\thealgorithm}{}
\floatname{algorithm}{}
\caption{Proposed SCoM-Based MIMO OA-FL Scheme} 
\label{alg:FL_framework} 
\begin{algorithmic}[1] 
\REQUIRE $T$, $\{Q_{m}\}$, $N_{\mathrm{S}}$, $N_{\mathrm{R}}$, $N_{\mathrm{T}}$, $\kappa$, $\lambda$, $\mathbf{A}$ and $\mathbf{s}$. 
\STATE {\textbf{Initialization:}} $t = 0$, the global model $\boldsymbol{\theta}^{(0)}$.
\FOR{ $t \in [T]$ }
    \STATE {\textbf{PS does:}}
    \STATE Estimate the CSI matrices $\{\mathbf{H}^{(t)}\}$, and optimize $\mathbf{F}^{(t)}$ and $\{\mathbf{P}_{m}^{(t)}\}$;
    \STATE Send $\boldsymbol{\theta}^{(t)}$ and $\{\mathbf{P}_{m}^{(t)}\}$ to the edge devices through error free links;
    \STATE {\textbf{Each device $m$ does in parallel:}}
    \STATE Compute $\mathbf{g}_{m}^{(t)}$ based on the $\mathcal{A}_{m}$ and $\boldsymbol{\theta}^{(t)}$ via \eqref{eq:GradDef};
    \STATE Compute $\mathbf{G}_{m}^{(t)}$ via \eqref{eq:Complexification}-\eqref{eq:G_def};
    \STATE Update $\boldsymbol{\Delta}_{m}^{(t+1)}$ via \eqref{eq:Delta_update};
    \STATE Send $\mathbf{G}_{m}^{(t)}$ to the PS with the precoding matrix $\mathbf{P}_{m}^{(t)}$ over the MIMO channel in \eqref{eq:receive_data};
    \STATE {\textbf{PS does:}}
    \STATE Reshape $\hat{\mathbf{R}}^{(t)}$ into $\hat{\mathbf{r}}^{(t)}$ via \eqref{eq:compressed_problem};
    \REPEAT
    \STATE Update $\mathbf{g}_{\mathrm{B}}^{\mathrm{post}(t)}$ via \eqref{eq:g_v_post_A}-\eqref{eq:g_v_ext_B};
    \UNTIL{convergence}
    \STATE Compute $\hat{\mathbf{g}}^{(t)}$ with $\hat{\mathbf{g}}^{\mathrm{no}(t)}$ via \eqref{eq:g_scaling};
    \STATE Update $\boldsymbol{\theta}^{(t+1)}$ via \eqref{eq:SGD_error};
\ENDFOR 
\end{algorithmic}
\end{algorithm}

There are two remaining issues in the design of the SCoM approach.
The first issue is to determine the number of multiplexed spatial streams, i.e., $N_{\mathrm{S}}$.
Generally speaking, increasing $N_{\mathrm{S}}$ reduces the number of channel uses required in gradient uploading, and hence reduces the communication overhead, as seen in \eqref{eq:channel_uses}. However, the interference between multiplexed spatial streams deteriorates with the increase of $N_{\mathrm{S}}$, which exacerbates the performance of gradient aggregation. 
This implies that we need to appropriately choose the value of $N_{\mathrm{S}}$ to strike a balance between the communication overhead and the gradient aggregation performance.

The second issue is to minimize the compressed gradient aggregation MSE $\sigma_{w}^{(t)}$, as previously mentioned in Section III-B. 
As seen in \eqref{eq:post_processed_matrix} and \eqref{eq:error_mimo}, to minimize $\sigma_{w}^{(t)}$ needs to appropriately design the post-processing matrix $\mathbf{F}^{(t)}$ and the precoding matrices $\{\mathbf{P}_{m}^{(t)}\}$. 
Note that a straightforward approach to design $\{\mathbf{P}_{m}^{(t)}\}$ is channel inversion, where $\{\mathbf{P}_{m}^{(t)}\}$ and $\mathbf{F}^{(t)}$ are designed to guarantee the equivalent channel gain matrices $\{\mathbf{F}^{(t)} \mathbf{H}_{m}^{(t)} \mathbf{P}_{m}^{(t)}\}$ to be diagonal matrices. 
However, it is well known that channel inversion suffers from noise amplification, especially when some devices are in deep fading. 
This motivates us to design a more robust algorithm to optimize the post-processing matrix $\mathbf{F}^{(t)}$ and the precoding matrices $\{\mathbf{P}_{m}^{(t)}\}$. 

In what follows, based on the standard assumptions in stochastic optimization, we first derive an upper bound on the learning performance loss of the SCoM-based MIMO OA-FL scheme. 
We then obtain the optimal number of $N_{\mathrm{S}}$ to minimize the upper bound in Section-IV. Furthermore, we present the proposed low-complexity optimization algorithm to minimize $\sigma_{w}^{(t)}$ by optimizing $\mathbf{F}^{(t)}$ and $\{\mathbf{P}_{m}^{(t)}\}$ in Section-V. In numerical results, we demonstrate the outstanding performance of our proposed algorithm.

\section{Performance Analysis}
In this section, we analyze the performance of the SCoM approach. Based on some standard assumptions in stochastic optimization, we first characterize the gradient aggregation error of the SCoM-based MIMO OA-FL scheme, and analyze the impact of this error on the FL learning performance. Finally, we provide a theorem about the optimal number of multiplexed data streams for SCoM to balance the learning performance and the communication overhead. 

\subsection{Assumptions}
We first make the following standard assumptions on the loss function $F(\cdot)$ in stochastic optimization \cite{friedlander2012hybrid}.
\begin{assumption}
    \label{asp:f_twice_diff}
    $F(\cdot)$ is continuously differentiable, and twice-continuously differentiable. 
\end{assumption}
\begin{assumption}
    \label{asp:f_Lipschitz}
    The gradient $\nabla F(\cdot)$ is $\omega$-Lipschitz continuous: 
    \begin{equation}
        \| \nabla F(\boldsymbol{\theta}) - \nabla F(\boldsymbol{\theta}^\prime)\|_2 \leq \omega \|\boldsymbol{\theta} - \boldsymbol{\theta}^\prime \|_2, \forall \boldsymbol{\theta}, \boldsymbol{\theta}^\prime \in \mathbb{R}^D.
        \label{eq:Lipschitz}
    \end{equation} 
\end{assumption}
\begin{assumption}
    \label{asp:f_convex}
    $F(\cdot)$ is strongly $\mu$-convex:
    \begin{equation}
        F(\boldsymbol{\theta}) 
        \geq F(\boldsymbol{\theta}^\prime) + (\boldsymbol{\theta} - \boldsymbol{\theta}^\prime)^\mathrm{T} \nabla F(\boldsymbol{\theta}^\prime) + \frac{\mu}{2} \| \boldsymbol{\theta} - \boldsymbol{\theta}^\prime\|_2^2, \forall \boldsymbol{\theta}, \boldsymbol{\theta}^\prime \in \mathbb{R}^D.
    \end{equation}
\end{assumption}
\begin{assumption}
    \label{asp:f_bound}
    The $l_2$-norm of sample-wise gradient $\nabla f\left(\boldsymbol{\theta}; \boldsymbol{\zeta}_{m,n}\right)$ is bounded by
    \begin{align}
        \| \nabla f\left(\boldsymbol{\theta}; \boldsymbol{\zeta}_{m,n}\right) \|_2^2 \leq \chi_1 + \chi_2 \| \nabla F(\boldsymbol{\theta}^{(t)}) \|_2^2,
        \label{eq:f_bound}
    \end{align}
    for some constants $\chi_1 \geq 0$ and $\chi_2 \geq 0$.
\end{assumption}

\subsection{Learning Performance Analysis}
We first discuss the gradient aggregation error.
Let $\mathbf{e}^{(t)} = \nabla F (\boldsymbol{\theta}^{(t)}) - \hat{\mathbf{g}}^{(t)}$ denote the gradient aggregation error, which is divided into two parts: 
\begin{equation}
    \mathbf{e}^{(t)} = 
    \underbrace{\nabla F (\boldsymbol{\theta}^{(t)}) -\tilde{\mathbf{g}}^{\mathrm{sp}(t)}}_{\mathbf{e}_{\mathrm{sp}}^{(t)} } 
    + \underbrace{\tilde{\mathbf{g}}^{\mathrm{sp}(t)} - \hat{\mathbf{g}}^{(t)} }_{ \mathbf{e}_{\mathrm{com}}^{(t)} }, 
    \label{eq:ErrorTwoPart}
\end{equation}
where $\mathbf{e}_{\mathrm{sp}}^{(t)}$ denotes the error caused by the sparsification on the devices, and $\mathbf{e}_{\mathrm{com}}^{(t)}$ denotes the communication error induced by the wireless channel with inter-stream interference and noise, and the recovery error of Turbo-CS. $\tilde{\mathbf{g}}^{\mathrm{sp}(t)}$ denotes the aggregation of the local sparsified gradients, given by
\begin{equation}
    \tilde{\mathbf{g}}^{\mathrm{sp}(t)} = \sum\nolimits_{m=1}^M q_{m} \tilde{\mathbf{g}}_{m}^{\mathrm{sp}(t)} \in \mathbb{R}^D,
    \label{eq:IdealSpGradient}
\end{equation}
where $\tilde{\mathbf{g}}_{m}^{\mathrm{sp}(t)} = [\Re\{\mathbf{g}_{m}^{\mathrm{sp}(t)}\}^\mathrm{T}, \Im \{ \mathbf{g}_{m}^{\mathrm{sp}(t)}\}^\mathrm{T}]^\mathrm{T}$.

Leveraging the preceding analysis of $\mathbf{e}^{(t)}$, we establish an upper bound on the loss function $F(\cdot)$ under Assumptions~\ref{asp:f_twice_diff}-\ref{asp:f_bound}, as stated in the following lemma.
\begin{lemma}
    \label{Lemma:Error_up}
    Under Assumptions~\ref{asp:f_twice_diff}-\ref{asp:f_bound} on the loss function $F(\cdot)$, at the $t$-th round, with the learning rate $\eta = 1/\omega$, we have
    \begin{equation}
        \mathbb{E}[F(\boldsymbol{\theta}^{(t+1)})]\!\leq\! \mathbb{E}[F(\boldsymbol{\theta}^{(t)})]\!-\!\frac{1}{2 \omega} \mathbb{E}[\|\nabla F(\boldsymbol{\theta}^{(t)})\|_2^{2}] + \!\frac{1}{\omega} (\mathbb{E}[\|\mathbf{e}_{\mathrm{sp}}^{(t)}\|_2^2] + \mathbb{E}[\|\mathbf{e}_{\mathrm{com}}^{(t)}\|_2^2]),
        \label{eq:Lemma_upper_bound_two_parts}
    \end{equation}
    where $\omega$ denotes the parameter of Lipschitz continuity defined in \eqref{eq:Lipschitz}, and $\mathbb{E}[\cdot]$ denotes the expectation with respect to (w.r.t.) $\{ \mathbf{N}^{(\tau)}, \mathbf{G}_{m}^{(\tau)} | m \in [M], \tau \in [t+1] \} $.
\end{lemma}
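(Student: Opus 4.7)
The plan is to apply the standard descent lemma implied by Assumption~\ref{asp:f_Lipschitz} ($\omega$-smoothness of $F$) to the noisy update \eqref{eq:SGD_error}, and then to split the resulting error term into its sparsification and communication components. Assumption~\ref{asp:f_twice_diff} ensures the descent lemma is applicable; Assumptions~\ref{asp:f_convex} and \ref{asp:f_bound} do not appear to be required for the stated inequality itself (they will presumably be invoked when \eqref{eq:Lemma_upper_bound_two_parts} is later turned into a convergence rate).

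\medskip

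First I would write $\hat{\mathbf{g}}^{(t)} = \nabla F(\boldsymbol{\theta}^{(t)}) - \mathbf{e}^{(t)}$, so that \eqref{eq:SGD_error} reads $\boldsymbol{\theta}^{(t+1)} - \boldsymbol{\theta}^{(t)} = -\eta(\nabla F(\boldsymbol{\theta}^{(t)}) - \mathbf{e}^{(t)})$. The $\omega$-Lipschitz continuity of $\nabla F$ yields the quadratic upper bound
\begin{equation}
F(\boldsymbol{\theta}^{(t+1)}) \leq F(\boldsymbol{\theta}^{(t)}) + \nabla F(\boldsymbol{\theta}^{(t)})^{\mathrm{T}}(\boldsymbol{\theta}^{(t+1)} - \boldsymbol{\theta}^{(t)}) + \frac{\omega}{2}\|\boldsymbol{\theta}^{(t+1)} - \boldsymbol{\theta}^{(t)}\|_2^2.
\notag
\end{equation}
Substituting the update and choosing $\eta = 1/\omega$, the cross term contributes $-\frac{1}{\omega}\|\nabla F(\boldsymbol{\theta}^{(t)})\|_2^2 + \frac{1}{\omega}\nabla F(\boldsymbol{\theta}^{(t)})^{\mathrm{T}}\mathbf{e}^{(t)}$, while the quadratic term expands to $\frac{1}{2\omega}\|\nabla F(\boldsymbol{\theta}^{(t)})\|_2^2 - \frac{1}{\omega}\nabla F(\boldsymbol{\theta}^{(t)})^{\mathrm{T}}\mathbf{e}^{(t)} + \frac{1}{2\omega}\|\mathbf{e}^{(t)}\|_2^2$. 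The inner-product terms cancel, leaving
\begin{equation}
F(\boldsymbol{\theta}^{(t+1)}) \leq F(\boldsymbol{\theta}^{(t)}) - \frac{1}{2\omega}\|\nabla F(\boldsymbol{\theta}^{(t)})\|_2^2 + \frac{1}{2\omega}\|\mathbf{e}^{(t)}\|_2^2.
\notag
\end{equation}

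\medskip

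Next I would use the decomposition \eqref{eq:ErrorTwoPart} together with the elementary inequality $\|\mathbf{a} + \mathbf{b}\|_2^2 \leq 2\|\mathbf{a}\|_2^2 + 2\|\mathbf{b}\|_2^2$ to obtain $\|\mathbf{e}^{(t)}\|_2^2 \leq 2\|\mathbf{e}_{\mathrm{sp}}^{(t)}\|_2^2 + 2\|\mathbf{e}_{\mathrm{com}}^{(t)}\|_2^2$. Plugging this back and taking expectation with respect to $\{\mathbf{N}^{(\tau)}, \mathbf{G}_m^{(\tau)}\}$ produces exactly \eqref{eq:Lemma_upper_bound_two_parts}.

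\medskip

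I do not expect a real obstacle here; the argument is essentially the classical noisy-gradient descent lemma, and the only non-routine choices are (i) the specific learning rate $\eta = 1/\omega$ that kills the cross term and produces the factor $\frac{1}{2\omega}$, and (ii) the split of the aggregation error via the parallelogram-type bound, which is what changes the prefactor on the error terms from $\frac{1}{2\omega}$ to $\frac{1}{\omega}$ in the final statement. The mildly delicate point to flag explicitly is that $\mathbf{e}_{\mathrm{sp}}^{(t)}$ is deterministic given $\{\mathbf{G}_m^{(\tau)}\}_{\tau \le t}$ while $\mathbf{e}_{\mathrm{com}}^{(t)}$ additionally depends on $\mathbf{N}^{(t)}$, so the expectation on the left-hand side is well defined with respect to the filtration stated in the lemma.
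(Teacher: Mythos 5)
Your proposal is correct and follows essentially the same route as the paper: the paper obtains your intermediate inequality $\mathbb{E}[F(\boldsymbol{\theta}^{(t+1)})]\leq \mathbb{E}[F(\boldsymbol{\theta}^{(t)})]-\frac{1}{2\omega}\mathbb{E}[\|\nabla F(\boldsymbol{\theta}^{(t)})\|_2^2]+\frac{1}{2\omega}\mathbb{E}[\|\mathbf{e}^{(t)}\|_2^2]$ by citing [friedlander2012hybrid, Lemma~2.1] rather than re-deriving the descent step with $\eta=1/\omega$, and then applies exactly your split $\|\mathbf{e}^{(t)}\|_2^2\leq 2(\|\mathbf{e}_{\mathrm{sp}}^{(t)}\|_2^2+\|\mathbf{e}_{\mathrm{com}}^{(t)}\|_2^2)$. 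Your explicit derivation of the descent inequality and your remark that Assumptions~\ref{asp:f_convex}--\ref{asp:f_bound} are not needed at this stage are both consistent with the paper's argument.
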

\begin{proof}
    Based on Assumptions~\ref{asp:f_twice_diff}-\ref{asp:f_bound}, from [\citenum{friedlander2012hybrid}, Lemma~2.1], we obtain 
    \begin{equation}
        \mathbb{E}[F(\boldsymbol{\theta}^{(t+1)})]\!\leq\! \mathbb{E}[F(\boldsymbol{\theta}^{(t)})]\!-\!\frac{1}{2 \omega} \mathbb{E}[\|\nabla F(\boldsymbol{\theta}^{(t)})\|_2^{2}] + \!\frac{1}{2\omega} \mathbb{E}[\|\mathbf{e}^{(t)}\|_2^{2}].
        \label{eq:Lemma_upper_bound}
    \end{equation}
    Based on \eqref{eq:ErrorTwoPart}, $\mathbb{E}[\|\mathbf{e}^{(t)}\|_2^{2}]$ satisfies
    \begin{equation}
        \mathbb{E}[\|\mathbf{e}^{(t)}\|_2^{2}] = \mathbb{E}[\|\mathbf{e}_{\mathrm{sp}}^{(t)} + \mathbf{e}_{\mathrm{com}}^{(t)}\|_2^{2}]
        \leq 2(\mathbb{E}[\|\mathbf{e}_{\mathrm{sp}}^{(t)}\|_2^2] + \mathbb{E}[\|\mathbf{e}_{\mathrm{com}}^{(t)}\|_2^2]).
        \label{eq:MSE_two_part}
    \end{equation}
    By plugging \eqref{eq:MSE_two_part} into \eqref{eq:Lemma_upper_bound}, we obtain \eqref{eq:Lemma_upper_bound_two_parts}.
\end{proof}
From Lemma~\ref{Lemma:Error_up}, we see that the upper bound on $F(\cdot)$ is related to two terms, i.e., $\mathbb{E}[\|\mathbf{e}_{\mathrm{sp}}^{(t)}\|_2^2]$ and $\mathbb{E}[\|\mathbf{e}_{\mathrm{com}}^{(t)}\|_2^2]$. We next derive upper bounds on $\mathbb{E}[\|\mathbf{e}_{\mathrm{sp}}^{(t)}\|_2^2]$ and $\mathbb{E}[\|\mathbf{e}_{\mathrm{com}}^{(t)}\|_2^2]$. 
We start with $\mathbb{E}[\|\mathbf{e}_{\mathrm{sp}}^{(t)}\|_2^2]$.
\begin{lemma}
    \label{Lemma:Error_sp}
    The sparsification MSE $\mathbb{E}[\|\mathbf{e}_{\mathrm{sp}}^{(t)}\|_2^2]$ satisfies the following inequality:
    \begin{align}
        \mathbb{E}[\|\mathbf{e}_{\mathrm{sp}}^{(t)}\|_2^2] \leq 4 M\frac{\lambda^2}{(1-\lambda)^2}
        (\chi_1 + \chi_2\mathbb{E}[\|\nabla F(\boldsymbol{\theta}^{(t)})\|_2^{2}]),
        \label{eq:bound_sp_MSE}
    \end{align}
    where $\lambda$ is the sparsity ratio of the local gradient defined below \eqref{eq:g_loc_sparsity}.
\end{lemma}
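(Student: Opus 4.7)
The plan is to decompose the aggregated sparsification error into per-device contributions, rewrite each per-device error in terms of the error-feedback vector $\boldsymbol{\Delta}_m^{(t)}$ defined in \eqref{eq:Delta_update}, control that vector by a recursion driven by the top-$k$ contractivity of $\mathrm{sp}(\cdot,\lambda)$, and finally invoke Assumption~\ref{asp:f_bound} to produce \eqref{eq:bound_sp_MSE}.

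First, combining \eqref{eq:IdealGradient} with \eqref{eq:IdealSpGradient} gives $\mathbf{e}_{\mathrm{sp}}^{(t)}=\sum_{m=1}^M q_{m}\bigl(\mathbf{g}_m^{(t)}-\tilde{\mathbf{g}}_m^{\mathrm{sp}(t)}\bigr)$. Applying Cauchy--Schwarz together with $q_m\le 1$ yields $\|\mathbf{e}_{\mathrm{sp}}^{(t)}\|_2^2\le M\sum_{m=1}^M\|\mathbf{g}_m^{(t)}-\tilde{\mathbf{g}}_m^{\mathrm{sp}(t)}\|_2^2$, which supplies the factor $M$ in the bound. Using \eqref{eq:Complexification}--\eqref{eq:Delta_update}, the per-device error equals, in real form, $\mathbf{g}_m^{(t)}-\tilde{\mathbf{g}}_m^{\mathrm{sp}(t)}=\tilde{\boldsymbol{\Delta}}_m^{(t+1)}-\tilde{\boldsymbol{\Delta}}_m^{(t)}$, so the elementary inequality $\|a-b\|_2^2\le 2\|a\|_2^2+2\|b\|_2^2$ reduces the task to uniformly bounding $\|\boldsymbol{\Delta}_m^{(t)}\|_2^2$, and contributes the factor $4$ after collecting both terms.

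To bound the error-feedback vector, I would apply the top-$k$ contractive bound $\|\mathbf{v}-\mathrm{sp}(\mathbf{v},\lambda)\|_2^2\le(1-\lambda)\|\mathbf{v}\|_2^2$ to $\mathbf{v}=\mathbf{g}_m^{\mathrm{cx}(t)}+\boldsymbol{\Delta}_m^{(t)}$, expand the square, and invoke Young's inequality $2\mathrm{Re}\langle\mathbf{a},\mathbf{b}\rangle\le\alpha\|\mathbf{a}\|_2^2+\alpha^{-1}\|\mathbf{b}\|_2^2$ with $\alpha$ tuned so that the coefficient in front of $\|\boldsymbol{\Delta}_m^{(t)}\|_2^2$ is strictly below one. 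This produces a geometric recursion $\|\boldsymbol{\Delta}_m^{(t+1)}\|_2^2\le\rho\|\boldsymbol{\Delta}_m^{(t)}\|_2^2+C(\lambda)\|\mathbf{g}_m^{\mathrm{cx}(t)}\|_2^2$, which, unrolled from the initialization $\boldsymbol{\Delta}_m^{(0)}=\mathbf{0}$ and summed geometrically, yields a coefficient of order $\lambda^2/(1-\lambda)^2$ times a gradient-norm expression. Plugging in Assumption~\ref{asp:f_bound} via $\|\nabla F_m(\boldsymbol{\theta}^{(\tau)})\|_2^2\le\chi_1+\chi_2\|\nabla F(\boldsymbol{\theta}^{(\tau)})\|_2^2$ and collecting all pieces then delivers \eqref{eq:bound_sp_MSE}.

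The main obstacle will be the middle step: pinning down the Young's inequality weight so that the contraction factor $\rho$ stays strictly below one \emph{and} the resulting constant collapses to exactly $\lambda^2/(1-\lambda)^2$ rather than a looser expression. A secondary subtlety is that unrolling the recursion naturally produces terms involving past gradient norms $\|\nabla F(\boldsymbol{\theta}^{(\tau)})\|_2^2$ for $\tau\le t$, while the right-hand side of \eqref{eq:bound_sp_MSE} involves only the current round; reconciling these will likely require either a worst-case replacement over past rounds or an implicit monotonicity argument on the expected gradient norm, and this step should be made explicit in the final proof.
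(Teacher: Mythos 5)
Your opening steps match the paper's proof: the decomposition $\mathbf{e}_{\mathrm{sp}}^{(t)}=\sum_m q_m(\mathbf{g}_m^{(t)}-\tilde{\mathbf{g}}_m^{\mathrm{sp}(t)})$, the rewriting of each per-device error as a difference of the error-accumulation vectors via \eqref{eq:Accumulation} and \eqref{eq:Delta_update}, and the Cauchy--Schwarz/$\|a\pm b\|_2^2\le 2\|a\|_2^2+2\|b\|_2^2$ steps that produce the factor $4M$ are exactly what the paper does. The gap is in the middle step, and it is not merely a matter of tuning constants: the contraction you propose, $\|\mathbf{v}-\mathrm{sp}(\mathbf{v},\lambda)\|_2^2\le(1-\lambda)\|\mathbf{v}\|_2^2$, combined with Young's inequality gives a recursion of the form $\|\boldsymbol{\Delta}_m^{(t+1)}\|_2^2\le(1-\lambda)(1+\alpha)\|\boldsymbol{\Delta}_m^{(t)}\|_2^2+(1-\lambda)(1+\alpha^{-1})\|\mathbf{g}_m^{\mathrm{cx}(t)}\|_2^2$, and no admissible choice of $\alpha$ makes the unrolled geometric sum collapse to $\lambda^2/(1-\lambda)^2$; the best you get scales like $(1-\lambda)/\lambda^2$, which \emph{diverges} as $\lambda\to0$, whereas the target constant \emph{vanishes} as $\lambda\to0$. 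The two bounds have opposite behavior in $\lambda$, so your route provably cannot reproduce \eqref{eq:bound_sp_MSE}. The paper instead imports from [\citenum{amiri2020machine}, Appendix~A] the norm-level bound $\|\boldsymbol{\Delta}_m^{(t)}\|_2\le\sum_{\tau=0}^{t-1}\lambda^{t-\tau}\|\mathbf{g}_m^{(\tau)}\|_2$, i.e.\ a per-round $\lambda$-contraction of the sparsification residual (not a $\sqrt{1-\lambda}$-type one), squares it, and bounds the geometric series $\sum_{\tau=0}^{t-1}\lambda^{t-\tau}\le\lambda/(1-\lambda)$; this is the single ingredient your plan is missing, and it is precisely the inequality you flagged as an ``obstacle'' without resolving.

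Your second concern, about past gradient norms $\|\nabla F(\boldsymbol{\theta}^{(\tau)})\|_2^2$ for $\tau\le t$ appearing after unrolling, is also left open in your plan, whereas the paper disposes of it directly through the particular form of Assumption~\ref{asp:f_bound}: as stated in \eqref{eq:f_bound}, the sample-wise gradient at any $\boldsymbol{\theta}$ is bounded by $\chi_1+\chi_2\|\nabla F(\boldsymbol{\theta}^{(t)})\|_2^2$ with the \emph{current} round's global gradient on the right, so every term $\|\mathbf{g}_m^{(\tau)}\|_2^2$ in the unrolled sum is bounded by the same expression and no monotonicity or worst-case argument over past rounds is needed. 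To fix your proof, replace the $(1-\lambda)$-contraction/Young recursion by the cited bound (or prove a per-round inequality of the form $\|\boldsymbol{\Delta}_m^{(t+1)}\|_2\le\lambda(\|\boldsymbol{\Delta}_m^{(t)}\|_2+\|\mathbf{g}_m^{\mathrm{cx}(t)}\|_2)$ under the paper's sparsification model) and invoke Assumption~\ref{asp:f_bound} in the form the paper states it.
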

\begin{proof}
    See Appendix \ref{Appendix:Prf_Lemma:Error_sp}.
\end{proof}

We next derive an upper bound of $\mathbb{E}[\|\mathbf{e}_{\mathrm{com}}^{(t)}\|_2^2]$ based on the state evolution.
The state evolution tracks the recovery error of Turbo-CS through the iteration between the variance transfer functions of the two modules\cite{ma2015performance}.
Specifically, the variance transfer functions are obtained by combining \eqref{eq:v_post_A}, \eqref{eq:v_ext_A}, \eqref{eq:v_post_B}, and \eqref{eq:v_ext_B}:
\begin{align}
    z^{(t)} & = \left( \frac{1}{\kappa} \dot (v^{(t)} + \sigma_{w}^{(t)}) - v^{(t)} \right)^{-1}, \label{eq:var_z} \\
    v^{(t)} & = \left( \frac{1}{\mmse(z^{(t)})} - z^{(t)} \right)^{-1}, \label{eq:var_v} 
\end{align}
where $z^{(t)} = \frac{1}{v_{\mathrm{B}}^{\mathrm{pri}(t)}}$, $v^{(t)} = v_{\mathrm{A}}^{\mathrm{pri}(t)}$, $\kappa$ is the compression ratio defined below \eqref{eq:g_compress}, $\sigma_{w}^{(t)}$ is defined in \eqref{eq:error_mimo} and $\mmse(z^{(t)}) = \mathbb{E}[\var[g^{\mathrm{no}(t)} | g^{\mathrm{no}(t)} + \delta^{(t)}]]$, with $g^{\mathrm{no}(t)}$ following the distribution in \eqref{eq:g_agg_sparsity}, and $\delta^{(t)} \sim \mathcal{CN}(0, 1/z^{(t)})$.
At communication round $t$, when the iteration between \eqref{eq:var_z} and \eqref{eq:var_v} converges, the variance $v^{(t)}$ achieves a fixed point $v^{\star(t)}$. 
The fixed point $v^{\star(t)}$ represents the recovery MSE of the sparse signal $\mathbf{g}^{\mathrm{no}(t)}$ in the compressed sensing problem \eqref{eq:compressed_problem} given by
\begin{equation}
    v^{\star(t)}(\mathbf{F}^{(t)}, \{\mathbf{P}_{m}^{(t)}\}) = \mathbb{E}[ \|\hat{\mathbf{g}}^{\mathrm{no}(t)} - \mathbf{g}^{\mathrm{no}(t)} \|_\mathrm{2}^2 ],
    \label{eq:fixed_point}
\end{equation}
where $v^{\star(t)}$ is a function w.r.t. $\mathbf{F}^{(t)}$ and $\{\mathbf{P}_{m}^{(t)}\}$ due to the definition of $\sigma_{w}^{(t)}$ in \eqref{eq:error_mimo}. 
The upper bound of $\mathbb{E}[\|\mathbf{e}_{\mathrm{com}}^{(t)}\|_2^2]$ is given in the following lemma.
\begin{lemma}
    \label{Lemma:Error_comm}
    The communication MSE $\mathbb{E}[\|\mathbf{e}_{\mathrm{com}}^{(t)}\|_2^2]$ satisfies the following inequality:
    \begin{align}
        \mathbb{E}[\|\mathbf{e}_{\mathrm{com}}^{(t)}\|_2^2] 
        \leq \sigma^{(t)} v^{\star(t)}(\mathbf{F}^{(t)}, \{\mathbf{P}_{m}^{(t)}\}).
        \label{eq:bound_comm_MSE}
    \end{align}
\end{lemma}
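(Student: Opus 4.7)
The plan is to unfold the definition of $\mathbf{e}_{\mathrm{com}}^{(t)}$ back to a scaled Turbo-CS recovery error, and then identify that recovery error with the state evolution fixed point $v^{\star(t)}$.

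First, I would put $\tilde{\mathbf{g}}^{\mathrm{sp}(t)}$ and $\hat{\mathbf{g}}^{(t)}$ on a common footing. Both live in $\mathbb{R}^D$ but are the real/imaginary restackings of vectors in $\mathbb{C}^{D/2}$: by linearity in the definition of $\tilde{\mathbf{g}}^{\mathrm{sp}(t)}$ it is the restacking of $\mathbf{g}^{\mathrm{sp}(t)} := \sum_{m} q_m \mathbf{g}_m^{\mathrm{sp}(t)}$, and by the definition in \eqref{eq:g_scaling} $\hat{\mathbf{g}}^{(t)}$ is the restacking of $\sqrt{\sigma^{(t)}}(\hat{\mathbf{g}}^{\mathrm{no}(t)} \odot \mathbf{s})$. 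Since restacking preserves the $\ell_2$-norm, this reduces $\|\mathbf{e}_{\mathrm{com}}^{(t)}\|_2^2$ to the single complex norm $\|\mathbf{g}^{\mathrm{sp}(t)} - \sqrt{\sigma^{(t)}}(\hat{\mathbf{g}}^{\mathrm{no}(t)} \odot \mathbf{s})\|_2^2$.

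Second, I would relate the target $\mathbf{g}^{\mathrm{sp}(t)}$ to the signal that Turbo-CS actually estimates, namely $\mathbf{g}^{\mathrm{no}(t)} = \sum_{m} q_m (\mathbf{g}_m^{\mathrm{sp}(t)} \odot \mathbf{s})/\sqrt{\sigma_m^{(t)}}$. Using the elementwise identity $\mathbf{s} \odot \mathbf{s} = \mathbf{1}$ (since $s_d \in \{\pm 1\}$) together with the consistency $\sigma_m^{(t)} \approx \sigma^{(t)}$ built into the common PS-side rescaling, one obtains $\sqrt{\sigma^{(t)}}(\mathbf{g}^{\mathrm{no}(t)} \odot \mathbf{s}) \approx \mathbf{g}^{\mathrm{sp}(t)}$. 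Adding and subtracting this inside the norm and applying $\|\mathbf{v} \odot \mathbf{s}\|_2 = \|\mathbf{v}\|_2$ yields $\|\mathbf{e}_{\mathrm{com}}^{(t)}\|_2^2 \le \sigma^{(t)} \|\hat{\mathbf{g}}^{\mathrm{no}(t)} - \mathbf{g}^{\mathrm{no}(t)}\|_2^2$. Taking expectation and invoking the definition \eqref{eq:fixed_point} of $v^{\star(t)}$ as the Turbo-CS recovery MSE predicted by the state evolution fixed point of \eqref{eq:var_z}–\eqref{eq:var_v}, I get $\mathbb{E}[\|\mathbf{e}_{\mathrm{com}}^{(t)}\|_2^2] \le \sigma^{(t)} v^{\star(t)}(\mathbf{F}^{(t)}, \{\mathbf{P}_m^{(t)}\})$, as claimed. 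The functional dependence on $\mathbf{F}^{(t)}$ and $\{\mathbf{P}_m^{(t)}\}$ enters solely through $\sigma_w^{(t)}$ in \eqref{eq:error_mimo}, which feeds the state evolution recursion.

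The main obstacle I anticipate is justifying the step from $\mathbf{g}^{\mathrm{sp}(t)}$ to $\sqrt{\sigma^{(t)}}(\mathbf{g}^{\mathrm{no}(t)} \odot \mathbf{s})$ rigorously, because the PS rescales using the averaged $\sigma^{(t)} = \tfrac{1}{M}\sum_m \sigma_m^{(t)}$ rather than the per-device $\sigma_m^{(t)}$, so the identity is only exact when all $\sigma_m^{(t)}$ coincide. A careful treatment requires either a concentration argument (for large $D$ the per-device variances concentrate across devices), or absorbing the residual mismatch into the inequality via a Cauchy–Schwarz type bound, or invoking it as an underlying modeling assumption paralleling the one already made in the Bernoulli–Gaussian prior \eqref{eq:g_agg_sparsity}. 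The remainder of the proof is essentially bookkeeping via $\mathbf{s} \odot \mathbf{s} = \mathbf{1}$ and the fixed-point characterization of Turbo-CS.
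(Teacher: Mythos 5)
Your proposal follows essentially the same route as the paper, whose entire proof is the one-line combination of \eqref{eq:g_scaling}, \eqref{eq:ErrorTwoPart} and \eqref{eq:fixed_point}; you simply spell out the bookkeeping (norm-preserving restacking, $\mathbf{s}\odot\mathbf{s}=\mathbf{1}$, and the identification of the residual with the Turbo-CS recovery MSE $v^{\star(t)}$). The mismatch you flag between the per-device normalizations $\sigma_{m}^{(t)}$ and the common PS-side rescaling $\sigma^{(t)}$ is genuine, but the paper's terse proof silently absorbs exactly this step, so your treatment is, if anything, more explicit than the original.
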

\begin{proof}
    Combining \eqref{eq:g_scaling}, \eqref{eq:ErrorTwoPart} and \eqref{eq:fixed_point} yields \eqref{eq:bound_comm_MSE}.
\end{proof}

We proceed to derive an upper bound for the expected difference between the loss function $F(\boldsymbol{\theta}^{(t+1)})$ at round $(t+1)$ and the optimal $F(\boldsymbol{\theta}^{\star})$, i.e., $\mathbb{E}[F(\boldsymbol{\theta}^{(t+1)}) - F(\boldsymbol{\theta}^{\star})]$. Applying Lemmas~\ref{Lemma:Error_up}-\ref{Lemma:Error_comm}, we obtain the following theorem.
\begin{theorem}
    \label{Theorem:Upperbound}
    Under Assumptions~\ref{asp:f_twice_diff}-\ref{asp:f_bound}, $\mathbb{E}[F(\boldsymbol{\theta}^{(t+1)}) - F(\boldsymbol{\theta}^{\star})]$ is upper bounded by
    \begin{align}
        \mathbb{E}[F(\boldsymbol{\theta}^{(t+1)}) - F(\boldsymbol{\theta}^{\star})] 
        \leq \mathbb{E}[F(\boldsymbol{\theta}^{(0)}) - F(\boldsymbol{\theta}^{\star})]
        \Psi^{t+1} + \sum\nolimits_{\tau = 1}^t \Psi^{t - \tau} \mathcal{C}^{(\tau)}(\mathbf{F}^{(t)}, \{\mathbf{P}_{m}^{(t)}\}), 
        \label{eq:F_dis_t_step}
    \end{align}
    where $\Psi$ is given by
    \begin{equation}
        \Psi = 1 - \frac{\mu}{\omega} + \frac{2 \mu \chi_2}{\omega } \cdot\frac{4 M \lambda^2}{(1-\lambda)^2},
        \label{eq:Psi}
    \end{equation}
    and $\mathcal{C}^{(t)}(\mathbf{F}^{(t)}, \{\mathbf{P}_{m}^{(t)}\})$ is given by
    \begin{equation}
        \mathcal{C}^{(t)}(\mathbf{F}^{(t)}, \{\mathbf{P}_{m}^{(t)}\}) = \frac{1}{\omega} \left( \frac{4 \chi_1 M \lambda^2}{(1-\lambda)^2} + \sigma v^{\star(t)}(\mathbf{F}^{(t)}, \{\mathbf{P}_{m}^{(t)}\}) \right).
        \label{eq:mathcal_C}
    \end{equation}
    When $\Psi < 1$, we have
    \begin{align}
        \lim_{T\rightarrow\infty}\mathbb{E}[F(\boldsymbol{\theta}^{(T)}) - F(\boldsymbol{\theta}^{\star})] 
        \leq \sum\nolimits_{\tau=1}^{T} \mathcal{C}^{(t)}(\mathbf{F}^{(t)}, \{\mathbf{P}_{m}^{(t)}\}).
        \label{eq:F_dis_limit}
    \end{align}
\end{theorem}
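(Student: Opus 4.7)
The plan is to merge Lemmas~\ref{Lemma:Error_up}--\ref{Lemma:Error_comm} into a single one-step contraction for $\mathbb{E}[F(\boldsymbol{\theta}^{(t)}) - F(\boldsymbol{\theta}^{\star})]$ and then unroll the resulting recursion over rounds. I would first substitute the sparsification bound \eqref{eq:bound_sp_MSE} and the communication bound \eqref{eq:bound_comm_MSE} into the one-step inequality \eqref{eq:Lemma_upper_bound_two_parts}. Collecting terms, the right-hand side decomposes into a coefficient times $\mathbb{E}[\|\nabla F(\boldsymbol{\theta}^{(t)})\|_{2}^{2}]$ plus the constant piece $\mathcal{C}^{(t)}$ of \eqref{eq:mathcal_C}, where the coefficient equals $-\frac{1}{2\omega} + \frac{4M\lambda^{2}\chi_{2}}{\omega(1-\lambda)^{2}}$.

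Next, I would invoke Assumption~\ref{asp:f_convex}. Minimizing the quadratic strong-convexity lower bound over $\boldsymbol{\theta}^{\prime}$ yields the Polyak--{\L}ojasiewicz-type inequality $\|\nabla F(\boldsymbol{\theta})\|_{2}^{2} \geq 2\mu\,(F(\boldsymbol{\theta}) - F(\boldsymbol{\theta}^{\star}))$. Provided the above coefficient is negative (the same algebraic condition as $\Psi < 1$), substituting this PL bound and subtracting $F(\boldsymbol{\theta}^{\star})$ from both sides produces the contraction
\[
\mathbb{E}[F(\boldsymbol{\theta}^{(t+1)}) - F(\boldsymbol{\theta}^{\star})] \leq \Psi\, \mathbb{E}[F(\boldsymbol{\theta}^{(t)}) - F(\boldsymbol{\theta}^{\star})] + \mathcal{C}^{(t)}(\mathbf{F}^{(t)}, \{\mathbf{P}_{m}^{(t)}\}),
\]
in which the factor $2\mu$ injected by the PL substitution combines with the sparsification coefficient so as to reproduce the exact expression \eqref{eq:Psi} for $\Psi$.

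Finally, I would iterate this contraction from round $0$ through round $t+1$ by a simple induction. Each iteration contributes one extra factor of $\Psi$ on the initial-gap term together with a $\mathcal{C}^{(\tau)}$ addend weighted by the appropriate power of $\Psi$, which directly produces \eqref{eq:F_dis_t_step}. For the asymptotic statement \eqref{eq:F_dis_limit}, the assumption $\Psi < 1$ drives $\Psi^{t+1} \to 0$ as $t \to \infty$, while the bound $\Psi^{t-\tau} \leq 1$ for every $\tau \leq t$ loosens the residual geometric sum to $\sum_{\tau}\mathcal{C}^{(\tau)}$.

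The main obstacle is the bookkeeping in the transition from the raw one-step bound to the contraction form: the factor $2\mu$ introduced by the PL substitution must be propagated through the sparsification coefficient so that the result matches the precise expression for $\Psi$ in \eqref{eq:Psi}, and the constants $\chi_{1}$ and $\sigma v^{\star(t)}$ must land inside $\mathcal{C}^{(t)}$ without picking up extraneous $\mu$-factors. A secondary subtlety is verifying that the sign requirement needed to apply the PL step is exactly the condition $\Psi < 1$ used in the asymptotic part, so that the two regimes in the statement are internally consistent.
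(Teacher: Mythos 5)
Your proposal is correct and follows essentially the same route as the paper: substitute Lemmas~\ref{Lemma:Error_sp} and \ref{Lemma:Error_comm} into Lemma~\ref{Lemma:Error_up}, apply the strong-convexity (PL-type) bound $\mathbb{E}[\|\nabla F(\boldsymbol{\theta}^{(t)})\|_2^2] \geq 2\mu\,\mathbb{E}[F(\boldsymbol{\theta}^{(t)})-F(\boldsymbol{\theta}^{\star})]$ to obtain the one-step contraction with factor $\Psi$, unroll the recursion to get \eqref{eq:F_dis_t_step}, and use $\Psi<1$ with $\Psi^{t-\tau}\leq 1$ for the limit \eqref{eq:F_dis_limit}. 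Your explicit observation that the PL substitution is only valid when the gradient coefficient is negative—which is exactly the condition $\Psi<1$—is a point the paper leaves implicit, so your treatment is if anything slightly more careful.
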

\begin{proof}
    See Appendix \ref{Appendix:Prf_Theorem:Upperbound}.
\end{proof}
Theorem~\ref{Theorem:Upperbound} shows that the upper bound of 
$\mathbb{E}[F(\boldsymbol{\theta}^{(t+1)})\!-\!F(\boldsymbol{\theta}^{\star})]$ 
is monotonic in $\mathcal{C}^{(t)}(\mathbf{F}^{(t)}, \{\mathbf{P}_{m}^{(t)}\})$, where $\mathcal{C}^{(t)}(\!\mathbf{F}^{(t)},\!\{\mathbf{P}_{m}^{(t)}\}\!)$ 
is a function w.r.t. the sparsity ratio $\lambda$ for sparse-coding, and the matrices $(\mathbf{F}^{(t)},\!\{\mathbf{P}_{m}^{(t)}\})$ for MIMO multiplexing. 
As seen in Theorem~\ref{Theorem:Upperbound}, reducing $\mathcal{C}^{(t)}(\mathbf{F}^{(t)},
\!\{\mathbf{P}_{m}^{(t)}\}\!)$ is desirable because it leads a faster convergence rate and a lower convergent point in \eqref{eq:F_dis_limit}. 
This motivates us to minimize $\mathcal{C}^{(t)}\!(\mathbf{F}^{(t)},\!\{\!\mathbf{P}_{m}^{(t)}\!\}\!)$ at each round $t$. 
Based on \eqref{eq:mathcal_C}, the properties of $\mathcal{C}^{(t)}\!(\mathbf{F}^{(t)},\!\{\!\mathbf{P}_{m}^{(t)}\!\}\!)$ are given in the following theorem.
\begin{theorem}
    \label{Theorem:fixed_point}
    (i) $\mathcal{C}^{(t)}(\mathbf{F}^{(t)}, \{\mathbf{P}_{m}^{(t)}\})$ is a monotonically decreasing function of the compressed gradient aggregation MSE $\sigma_{w}^{(t)}$. 
    
    (ii) $\mathcal{C}^{(t)}(\mathbf{F}^{(t)}, \{\mathbf{P}_{m}^{(t)}\})$ is a monotonically increasing function of the compression ratio $\kappa$.
    
    (iii) For a fixed number of channel uses (i.e., $K$ is fixed), the optimal number of multiplexed data streams to minimize $\mathcal{C}^{(t)}(\mathbf{F}^{(t)}, \{\mathbf{P}_{m}^{(t)}\})$ is given by $N_{\mathrm{S}} = \min\{N_{\mathrm{R}}, N_{\mathrm{T}} \}$.
\end{theorem}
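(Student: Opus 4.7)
The plan is to establish the three parts of Theorem~\ref{Theorem:fixed_point} in order, with (i) and (ii) serving as building blocks for (iii). Since \eqref{eq:mathcal_C} shows that $\mathcal{C}^{(t)}$ is affine and strictly increasing in the Turbo-CS fixed point $v^{\star(t)}$, each claim reduces to analyzing how $v^{\star(t)}$ depends on the relevant variable. For parts (i) and (ii), I would view the recursion \eqref{eq:var_z}-\eqref{eq:var_v} as a composite map $v \mapsto \mathcal{T}(v;\sigma_{w}^{(t)},\kappa)$ and verify that $\mathcal{T}$ is monotone in $v$ (so that fixed-point iteration converges to $v^{\star(t)}$) and that $\mathcal{T}$ is monotone in the parameter of interest. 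Differentiating \eqref{eq:var_z} shows that $z^{(t)}$ moves monotonically with $\sigma_{w}^{(t)}$ (the bracket $\tfrac{1}{\kappa}(v+\sigma_{w}^{(t)})-v$ is linear in $\sigma_{w}^{(t)}$) and with $\kappa$; combining this with the well-known monotonicity of $\mathrm{mmse}(z)$ in $z$ for Bernoulli-Gaussian priors, the effect propagates through \eqref{eq:var_v}. An implicit-function argument at the fixed point then transfers parameter-wise monotonicity of $\mathcal{T}$ to $v^{\star(t)}$, and hence to $\mathcal{C}^{(t)}$.

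For part (iii), the central observation is that with $K$ fixed, $C=K N_{\mathrm{S}}$ and $\kappa=K N_{\mathrm{S}}/(D/2)$ scale linearly with $N_{\mathrm{S}}$ via \eqref{eq:channel_uses}. The proof then splits into two cases. When $N_{\mathrm{S}}<\min\{N_{\mathrm{R}},N_{\mathrm{T}}\}$, increasing $N_{\mathrm{S}}$ enlarges $\kappa$ while leaving enough degrees of freedom in the precoding/post-processing to keep $\sigma_{w}^{(t)}$ unchanged: given any feasible $(\mathbf{F}^{(t)},\{\mathbf{P}_{m}^{(t)}\})$ one can append an extra stream through an unused channel mode, so part (ii) forces $\mathcal{C}^{(t)}$ to strictly decrease. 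When $N_{\mathrm{S}}>\min\{N_{\mathrm{R}},N_{\mathrm{T}}\}$, the cascaded channel $\mathbf{F}^{(t)}\mathbf{H}_{m}^{(t)}\mathbf{P}_{m}^{(t)}\in\mathbb{C}^{N_{\mathrm{S}}\times N_{\mathrm{S}}}$ has rank at most $\min\{N_{\mathrm{R}},N_{\mathrm{T}}\}<N_{\mathrm{S}}$ for every device $m$ (since $\mathbf{H}_{m}^{(t)}$ is $N_{\mathrm{R}}\times N_{\mathrm{T}}$). I would use an SVD-based subspace argument to show that any such scheme is equivalent to a rank-$\min\{N_{\mathrm{R}},N_{\mathrm{T}}\}$ scheme that truly multiplexes only $\min\{N_{\mathrm{R}},N_{\mathrm{T}}\}$ streams, forcing $\sigma_{w}^{(t)}$ to be no smaller than that reduced scheme's value while paying a strictly larger $\kappa$-penalty through (ii).

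The main obstacle will be the $N_{\mathrm{S}}>\min\{N_{\mathrm{R}},N_{\mathrm{T}}\}$ case. The rank-deficiency argument is clean at the single-user level, but the MAC setting couples all $M$ devices through the common post-processing matrix $\mathbf{F}^{(t)}$, so the ``rank-reduced equivalent scheme'' must be constructed jointly across users while matching each device's effective channel subspace. A cleaner route may be to lower-bound $\sigma_{w}^{(t)}$ directly in terms of the $(N_{\mathrm{S}}-\min\{N_{\mathrm{R}},N_{\mathrm{T}}\})$ smallest singular values of the stacked MAC channel and then invoke parts (i) and (ii) jointly to compare against the benchmark $N_{\mathrm{S}}=\min\{N_{\mathrm{R}},N_{\mathrm{T}}\}$; pinning down this quantitative trade-off, rather than the qualitative monotonicities of (i)-(ii), is the technically delicate step.
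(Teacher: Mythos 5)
Your skeleton matches the paper's own proof (Appendix~C) almost exactly: parts (i)--(ii) are handled there by reading monotonicity directly off the state-evolution pair \eqref{eq:var_z}--\eqref{eq:var_v} and then off \eqref{eq:mathcal_C}, and part (iii) by the same two-case split you propose (rank deficiency when $N_{\mathrm{S}}>\min\{N_{\mathrm{R}},N_{\mathrm{T}}\}$, reduced $\kappa$ for fixed $K$ when $N_{\mathrm{S}}<\min\{N_{\mathrm{R}},N_{\mathrm{T}}\}$). The difference is one of rigor: the paper sets up no fixed-point/implicit-function argument, constructs no SVD-reduced equivalent scheme, and never weighs the competing effects you flag --- for $N_{\mathrm{S}}>\min\{N_{\mathrm{R}},N_{\mathrm{T}}\}$ it simply asserts that $\sigma_{w}^{(t)}$ ``significantly increases'' because $\mathrm{rank}(\mathbf{H}_{m}^{(t)})<N_{\mathrm{S}}$, and concludes $v^{\star(t)}$ and $\mathcal{C}^{(t)}$ grow, without comparing this against the larger $\kappa$ that a larger $N_{\mathrm{S}}$ buys. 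So the ``technically delicate step'' you identify is genuine, but it is not resolved in the paper either; your plan holds itself to a stricter standard than the published argument.

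Two concrete problems remain in your proposal. First, your two invocations of part (ii) point in opposite directions: in the $N_{\mathrm{S}}<\min\{N_{\mathrm{R}},N_{\mathrm{T}}\}$ case you argue that enlarging $\kappa$ at fixed $\sigma_{w}^{(t)}$ makes $\mathcal{C}^{(t)}$ strictly decrease, while in the $N_{\mathrm{S}}>\min\{N_{\mathrm{R}},N_{\mathrm{T}}\}$ case you speak of a ``strictly larger $\kappa$-penalty''; both cannot follow from one monotonicity statement. (Note also that the theorem as printed declares $\mathcal{C}^{(t)}$ increasing in $\kappa$ and decreasing in $\sigma_{w}^{(t)}$, whereas \eqref{eq:var_z}--\eqref{eq:var_v} and the paper's own proof of (iii) use the opposite directions --- larger $\kappa$ helps, larger $\sigma_{w}^{(t)}$ hurts --- so you must fix one sign convention and apply it consistently.) Second, the step ``append an extra stream through an unused channel mode so that $\sigma_{w}^{(t)}$ is unchanged'' does not follow as stated: $\sigma_{w}^{(t)}$ in \eqref{eq:error_mimo} is a per-entry MSE under the per-device power budget $\|\mathbf{P}_{m}^{(t)}\|_{\mathrm{F}}^{2}\leq P_{0}$, so activating another stream re-splits transmit power and alters the per-stream SNR; at best you can show the optimized $\sigma_{w}^{(t)}$ degrades slowly, which is again a quantitative claim. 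In short, your route reproduces the paper's, and to make it a complete proof you would still need a consistent $\kappa$-monotonicity statement plus an explicit bound showing the $\sigma_{w}^{(t)}$ degradation dominates the $\kappa$ gain above $\min\{N_{\mathrm{R}},N_{\mathrm{T}}\}$ and is dominated by it below.
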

\begin{proof}
    Please refer to Appendix \ref{Appendix:Prf_Theorem:fixed_point}.
\end{proof}
From Theorem~\ref{Theorem:fixed_point}, the problem of minimizing $\mathcal{C}^{(t)}(\mathbf{F}^{(t)}, \{\mathbf{P}_{m}^{(t)}\})$ is equivalent to the problem of minimizing $\sigma_{w}^{(t)}$. 
Besides, Theorem~\ref{Theorem:fixed_point} gives the optimal number of multiplexed streams $N_{\mathrm{S}}$ to minimize $\mathcal{C}^{(t)}(\mathbf{F}^{(t)}, \{\mathbf{P}_{m}^{(t)}\})$.
In the following, we design a low-complexity algorithm to solve this problem in Section~V, and verify the optimal $N_{\mathrm{S}}$ via numerical results in Section~VI. 

\section{System Optimization}
In this section, we first formulate the optimization problem to minimize the compressed gradient aggregation MSE $\sigma_{w}^{(t)}$, and then propose a low-complexity algorithm based on alternating optimization (AO) and alternating direction method of multipliers (ADMM) to solve the problem. 

\subsection{Problem Formulation}
We formulate the problem of minimizing $\sigma_{w}^{(t)}$ as (P1):
\begin{subequations}
    \begin{align}
        \text{(P1) :} \min_{\mathbf{F}^{(t)}, \{\mathbf{P}_{m}^{(t)}\}} \quad 
        & \sigma_{w}^{(t)}
        \label{eq:P1_obj} \\
        \mathrm{s.t.} \quad
        & \| \mathbf{P}_{m}^{(t)} \|_\mathrm{F}^2 \leq P_0, m \in [M]. \label{constraint_u}
    \end{align}
\end{subequations}
We note that since (P1) is solved at each round $(t)$, we omit the superscript $t$ in the description of this section for brevity.  
To further derive an explicit expression of $\sigma_{w}$, we expand \eqref{eq:error_mimo}:
\begin{align}
    \sigma_{w}
    & \overset{(a)}{=} \mathbb{E}\Big[ \big\| \sum\nolimits_{m=1}^M ( \mathbf{F} \mathbf{H}_{m} \mathbf{P}_{m} - q_{m} \mathbf{I}) \mathbf{G}_{m} \big\|_\mathrm{F}^2\Big]  + K \sigma_{\mathrm{noise}}\left\| \mathbf{F} \right\|_\mathrm{F}^2 \\
    & \overset{(b)}{=} K \sum\nolimits_{m=1}^M \sum\nolimits_{m^\prime=1}^M 
    \rho_{m^\prime, m} \tr \left ( \mathbf{P}_{m}^\mathrm{H} \mathbf{H}_{m}^\mathrm{H} \mathbf{F}^\mathrm{H} \mathbf{F} \mathbf{H}_{m^\prime} \mathbf{P}_{m^\prime} \right. \notag \\
    & \quad \quad \left.- q_{m^\prime} \mathbf{P}_{m}^\mathrm{H} \mathbf{H}_{m}^\mathrm{H} \mathbf{F}^\mathrm{H} - q_{m} \mathbf{F} \mathbf{H}_{m^\prime} \mathbf{P}_{m^\prime} + q_{m^\prime} q_{m} \mathbf{I} \right ) + K \sigma_{\mathrm{noise}}\left\| \mathbf{F} \right\|_\mathrm{F}^2, \label{eq:P1_obj_expand}
\end{align}
where step (a) is obtained by plugging the definitions of $\mathbf{R}$ and $\hat{\mathbf{R}}$ in \eqref{eq:receive_data} into \eqref{eq:error_mimo}, and in step (b) we denote the spatial correlation factor as $\rho_{m^\prime, m} = \mathbb{E}[\tr(\mathbf{G}_{m^\prime} \mathbf{G}_{m}{}^\mathrm{H})] \in [0,1], \forall m, m^\prime \in [M]$\footnote{
The spatial correlation between the local gradients of different devices is influenced by various factors, such as the data distribution, and the network topology. For more detailed discussions on spatial correlation, please refer to \cite{zhong2022over}, \cite{rodio2023federated}.
}. Based on \eqref{eq:P1_obj_expand}, (P1) is equivalently transferred to (P2):
\begin{subequations}
    \begin{align}
        \text{(P2) :} \min_{\mathbf{F}, \{\mathbf{P}_{m}\}} \quad 
        & \sum\nolimits_{m=1}^M \sum\nolimits_{m^\prime=1}^M 
        \rho_{m^\prime, m} \tr \left ( \mathbf{P}_{m}^\mathrm{H} \mathbf{H}_{m}^\mathrm{H} \mathbf{F}^\mathrm{H} \mathbf{F} \mathbf{H}_{m^\prime} \mathbf{P}_{m^\prime} \right. \notag \\
        & \qquad \qquad \left. - q_{m^\prime} \mathbf{P}_{m}^\mathrm{H} \mathbf{H}_{m}^\mathrm{H} \mathbf{F}^\mathrm{H} - q_{m} \mathbf{F} \mathbf{H}_{m^\prime} \mathbf{P}_{m^\prime} \right ) + \sigma_{\mathrm{noise}}\left\| \mathbf{F} \right\|_\mathrm{F}^2 \label{eq:P2_obj} \\
        \mathrm{s.t.} \quad
        & \| \mathbf{P}_{m} \|_\mathrm{F}^2 \leq P_0, m \in [M], \notag
    \end{align}
\end{subequations}
(P2) is an optimization problem w.r.t. the post-processing matrix $\mathbf{F}$ and the precoding matrices $\{\mathbf{P}_{m}\}$, respectively. 
However, (P2) is non-convex due to the coupling of $\mathbf{F}$ and $\{\mathbf{P}_{m}\}$. 
We adopt the AO framework to solve (P2). The AO framework contains two steps:
at Step 1, by fixing the precoding matrices $\{\mathbf{P}_{m}\}$, we can obtain a closed form of the optimal post-processing matrix $\mathbf{F}$; at Step 2, by fixing $\mathbf{F}$ and $\{\mathbf{P}_{m^\prime}\}_{m \neq m^\prime}$, we design a low-complexity algorithm to optimize $\mathbf{P}_{m}$ by using ADMM. 
The iteration of the two steps continues until the objective function in \eqref{eq:P2_obj} reaches a stable point. 
The details are discussed in what follows.

\subsection{AO-ADMM Algorithm}
We now describe the proposed AO-ADMM algorithm to solve (P2), in the following two alternating steps.
\subsubsection{Step~1: Optimizing \texorpdfstring{$\mathbf{F}$}{F} by fixing \texorpdfstring{$\{\mathbf{P}_{m}\}$}{Um} }
With fixed $\{\mathbf{P}_{m}\}$, (P2) reduces to
\begin{equation}
    \text{(P3) :} \min_{\mathbf{F}} \quad \eqref{eq:P2_obj}. \notag
\end{equation}
(P3) is a convex quadratic programming (QP) problem that has no constraints and depends on $\mathbf{F}$. It can be solved by setting $\nabla_{\mathbf{F}} \eqref{eq:P2_obj} = \mathbf{0}$,  
and the optimal $\mathbf{F}$ is given by
\begin{align}
    \mathbf{F}^\star
    & = \left ( \sum_{m=1}^M \sum_{m^\prime=1}^M q_{m^\prime} \rho_{m^\prime, m}  \mathbf{P}_{m}^\mathrm{H} \mathbf{H}_{m}^\mathrm{H} \right ) \left ( \sum_{m=1}^M \sum_{m^\prime=1}^M \rho_{m^\prime, m} 
    \mathbf{H}_{m^\prime} \mathbf{P}_{m^\prime} \mathbf{P}_{m}^\mathrm{H} \mathbf{H}_{m}^\mathrm{H} + \sigma_{\mathrm{noise}}\mathbf{I}\right )^{-1}.
    \label{eq:update_F}
\end{align}

\subsubsection{Step~2: Optimizing \texorpdfstring{$\mathbf{P}_{m}$}{Pm} by fixing \texorpdfstring{$\mathbf{F}$}{F} and \texorpdfstring{$\{\mathbf{P}_{m^\prime}\}_{m^\prime \neq m}$}{Pmprime}}
With fixed $\mathbf{F}$ and $\{\mathbf{P}_{m^\prime}\}_{m^\prime \neq m}$, we obtain the following subproblem to optimize $\mathbf{P}_{m}$:
\begin{align*}
    \text{(P4)}: 
    \min_{\mathbf{P}_{m}} 
    \quad \tr(\mathbf{P}_{m}^\mathrm{H} \mathbf{B}_{m}  \mathbf{P}_{m}) - 2 \Re \{ \tr( \mathbf{C}_{m} \mathbf{P}_{m} ) \}, 
    \quad \mathrm{s.t.} \quad 
    \tr( \mathbf{P}_{m}^\mathrm{H} \mathbf{P}_{m} ) \leq P_0,
\end{align*}
where $\mathbf{B}_{m} $ and $\mathbf{C}_{m} $ are respectively defined by
\begin{align}
    \label{eq:compute_B_C}
    \mathbf{B}_{m} 
    = \mathbf{H}_{m}^\mathrm{H} \mathbf{F}^\mathrm{H} \mathbf{F} \mathbf{H}_{m}, 
    \quad \mathbf{C}_{m} 
    = q_{m} \mathbf{F} \mathbf{H}_{m}  - \sum\nolimits_{m^\prime \neq m}^M 
    \rho_{m^\prime, m} ( \mathbf{P}_{m^\prime}^\mathrm{H} \mathbf{H}_{m^\prime}^\mathrm{H} \mathbf{F}^\mathrm{H} \mathbf{F} \mathbf{H}_{m} - 2 q_{m^\prime} \mathbf{F} \mathbf{H}_{m} ).
\end{align}
To solve (P4) efficiently, we propose a low-complexity algorithm based on consensus-ADMM\cite{huang2016consensus}.
We start with equivalently transferring (P4) into the following consensus-ADMM form:
\begin{subequations}
    \begin{align}
        \text{(P5)}: \min_{\mathbf{P}_{m}, \mathbf{Z}_{m}} \quad
        & \tr(\mathbf{P}_{m}^\mathrm{H} \mathbf{B}_{m} \mathbf{P}_{m}) - 2 \Re \{ \tr( \mathbf{C}_{m} \mathbf{P}_{m} )\}\\
        \mathrm{s.t.} \quad 
        & \tr( \mathbf{Z}_{m}^\mathrm{H} \mathbf{Z}_{m} ) \leq P_0, \label{eq:update_z_cons}\\
        & \mathbf{Z}_{m} = \mathbf{P}_{m} \label{eq:eq_cons},
    \end{align}
\end{subequations}
where $\mathbf{Z}_{m} \in \mathbb{C}^{N_{\mathrm{T}} \times N_{\mathrm{S}}}$ is the auxiliary variable. We define the augmented Lagrangian of (P5):
\begin{equation}
    \mathcal{L}_{\gamma}(\mathbf{P}_{m},\!\mathbf{Z}_{m},\!\mathbf{V}_{m})\!=\!\tr(\mathbf{P}_{m}^\mathrm{H} \mathbf{B}_{m} \mathbf{P}_{m})\!-\!2 \Re \{ \tr( \mathbf{C}_{m} \mathbf{P}_{m} )\} + \frac{\gamma}{2} \|\mathbf{Z}_{m} - \mathbf{P}_{m} + \mathbf{V}_{m} \|_\mathrm{F}^2 - \frac{\gamma}{2} \|\mathbf{V}_{m} \|_\mathrm{F}^2,
    \label{eq:augmented_Lagrangian}
\end{equation}
where $\gamma\!>\!0$ is the penalty parameter, and $\mathbf{V}_{m}\!\in\!\mathbb{C}^{N_{\mathrm{T}} \times N_{\mathrm{S}}}$ is the scaled dual variable corresponding to the constraint in \eqref{eq:eq_cons}. 
Based on \eqref{eq:augmented_Lagrangian}, the updating rules of $(\!\mathbf{P}_{m},\!\mathbf{Z}_{m},\!\mathbf{V}_{m}\!)$ are given by
\begin{subequations}
    \label{eq:updating_rules}
    \begin{align}
        \mathbf{P}_{m} \leftarrow & (\mathbf{B}_{m} + \gamma\mathbf{I})^{-1}\left( \mathbf{C}_{m} + \gamma(\mathbf{Z}_{m} + \mathbf{V}_{m}) \right); \label{eq:update_U}\\
        \mathbf{Z}_{m} \leftarrow & 
        \arg \min_{\mathbf{Z}_{m}} \|\mathbf{Z}_{m} - \mathbf{P}_{m} + \mathbf{V}_{m} \|_\mathrm{F}^2 \quad \mathrm{s.t.} \eqref{eq:update_z_cons}; \\
        \mathbf{V}_{m} \leftarrow & \mathbf{V}_{m} + \mathbf{Z}_{m} - \mathbf{P}_{m}.
        \label{eq:update_V}
    \end{align}
\end{subequations}
From \eqref{eq:updating_rules}, we see that both $\mathbf{P}_{m}$ and $\mathbf{V}_{m}$ have closed form updating expressions. The remaining issue is to solve the subproblem of updating $\mathbf{Z}_{m}$:
\begin{align*}
    \text{(P6)}: \min_{\mathbf{Z}_{m}} & \quad \|\mathbf{Z}_{m} - \mathbf{P}_{m} + \mathbf{V}_{m} \|_\mathrm{F}^2, 
    \quad \mathrm{s.t.} \quad \eqref{eq:update_z_cons}.
\end{align*}
The Lagrangian function of (P6) can be formulated as
\begin{equation}
    \mathcal{L}(\mathbf{Z}_{m}) = \|\mathbf{Z}_{m} - \mathbf{P}_{m} + \mathbf{V}_{m} \|_\mathrm{F}^2 + \zeta_{m} (\tr( \mathbf{Z}_{m}^\mathrm{H} \mathbf{Z}_{m} ) - P_0 ),
\end{equation}
where $\zeta_{m}$ is the dual variable corresponding to the constraint in \eqref{eq:update_z_cons}. By setting $\nabla \mathcal{L}(\mathbf{Z}_{m}) = 0$, we obtain the optimal $\mathbf{Z}_{m}$, given by
\begin{equation}
    \mathbf{Z}_{m} = \frac{1}{1 + \zeta_{m}}(\mathbf{P}_{m}-\mathbf{V}_{m}). 
    \label{eq:update_Z}
\end{equation}
By plugging \eqref{eq:update_Z} into \eqref{eq:update_z_cons}, the update of the dual variable $\zeta_{m}$ can be obtained based on the complementary slackness condition:
\begin{equation}
    \zeta_{m} = \max \Big \{\|\mathbf{P}_{m}-\mathbf{V}_{m}\|_\mathrm{F} / \sqrt{P_0} - 1, 0 \Big \}.
    \label{eq:update_zeta}
\end{equation}

To sum up, the proposed algorithm to optimize $\mathbf{P}_{m}$ is to alternatively update $\mathbf{P}_{m}$ via \eqref{eq:update_U}, the auxiliary variable $\mathbf{Z}_{m}$ via \eqref{eq:update_Z}, and the dual variables $\mathbf{V}_{m}$ via \eqref{eq:update_V} and $\zeta_{m}$ via \eqref{eq:update_zeta}, until the gap $\|\mathbf{P}_{m} - \mathbf{Z}_{m}\|_\mathrm{F}$ less than a predetermined convergence threshold $\epsilon$. 
We summarize the proposed AO-ADMM algorithm to optimize $(\mathbf{F}, \{\mathbf{P}_{m}\})$ in Algorithm~\ref{alg:AO-ADMM}.

\begin{algorithm}[htb]
\caption{ Proposed AO-ADMM Algorithm to optimize $(\mathbf{F}, \{\mathbf{P}_{m}\})$} 
\label{alg:AO-ADMM} 
\begin{algorithmic}[1] 
\REQUIRE $\{q_{m}, \mathbf{H}_{m}, \boldsymbol{\rho} | m \in [M]\}$, $I_{\max}$ and $\epsilon$. 
\STATE {\textbf{Initialization:}} Let both $\mathbf{F}$ and $\{\mathbf{P}_{m}\}$ feasible for (P2);
\FOR{$\tau \in [I_{\max}]$}
    \STATE {\textbf{Step~1:}}
    \STATE Update $\mathbf{F}$ via \eqref{eq:update_F};
    \STATE {\textbf{Step~2:}} 
    \FOR{$m \in [M]$ }
        \STATE Compute $\mathbf{B}_{m}$, and $\mathbf{C}_{m}$ based on \eqref{eq:compute_B_C};
        \REPEAT
            \STATE Update $\mathbf{P}_{m}$ via \eqref{eq:update_U};
            \STATE Update $\mathbf{Z}_{m}$ via \eqref{eq:update_Z} and update $\zeta_{m}$ via \eqref{eq:update_zeta};
            \STATE Update $\mathbf{V}_{m}$ via \eqref{eq:update_V};
        \UNTIL{$\|\mathbf{P}_{m} - \mathbf{Z}_{m}\|_\mathrm{F} \leq \epsilon$; }
    \ENDFOR
\ENDFOR
\ENSURE $(\mathbf{F}, \{\mathbf{P}_{m}\})$.
\end{algorithmic}
\end{algorithm}

\subsection{Complexity and Convergence Analysis}
We first briefly discuss the computational complexity involved in Algorithm~\ref{alg:AO-ADMM}. We assume that Algorithm~\ref{alg:AO-ADMM} needs $I_{\max}$ iterations to converge. 
In each iteration $\tau$, the algorithm optimizes $\mathbf{F}$ and $\{\mathbf{P}_{m}\}$ alternatively. 
On one hand, for each $\mathbf{P}_{m}$, the algorithm solves (P5) by alternatively updating $\mathbf{P}_{m}$, $\mathbf{Z}_{m}$, $\mathbf{V}_{m}$, and $\zeta_{m}$ until the gap $\|\mathbf{P}_{m} - \mathbf{Z}_{m}\|_\mathrm{F}$ converges, which has a complexity of $\mathcal{O}(N^3\log(1/\epsilon))$ with $N = \max\{N_{\mathrm{T}}, N_{\mathrm{R}}, N_{\mathrm{S}}\}$.
On the other hand, the algorithm computes the optimal $\mathbf{F}$ via \eqref{eq:update_F}, resulting a worst complexity of $\mathcal{O}(N^3)$.
Therefore, the overall complexity of Algorithm~\ref{alg:AO-ADMM} is given by $\mathcal{O}(I_{\max}(M N^3\log(1/\epsilon) + N^3)$. 
In contrast, the complexity of solving (P5) with the interior point method \cite{wright1997primal} is $\mathcal{O}((N_{\mathrm{T}} N_{\mathrm{S}})^{3.5} \log(1/\epsilon)) \approx \mathcal{O}(N^{7} \log(1/\epsilon))$, which is much higher than that of Algorithm~\ref{alg:AO-ADMM}.

The convergence of Algorithm~\ref{alg:AO-ADMM} is guaranteed by noting that the monotonically non-decreasing objective of (P2) in the AO iterations, and the convergence of solving subproblem (P5) with ADMM in each AO iteration\cite{huang2016consensus}.

\section{Numerical Experiments}

\subsection{Simulation Setup}

We simulate a scenario in three dimensions (3-D) as depicted in Fig.~\ref{fig:BS}. 
The cylindrical coordinates $(\delta, \phi, \xi)$ represent the point locations, where $\delta$ is the radial distance from the point to the origin, $\phi$ is the azimuth angle, and $\xi$ is the height.  
The PS is located at $(0, 0, 10)$. 
The $m$-th device is located at $(\delta_{m}, \phi_{m}, 0)$, where $\delta^2_{m}$ is uniformly distributed over $[0, \Delta^2]$, and $\phi_{m}$ is uniformly distributed over $[0, 2\pi)$. 
The channel model is given by $\mathbf{H}_{m} = \sqrt{G_{\mathrm{R}} G_{\mathrm{T}} \ell \tilde{\delta}_{m}^{-\beta}} \mathbf{\tilde{H}}_{m}$ \cite{goldsmith2005wireless}, 
where each entry of $\mathbf{\tilde{H}}_{m}$ is i.i.d. drawn from $\mathcal{CN}(0, 1)$, $G_{\mathrm{T}}$ is the antenna gain at each device, $G_{\mathrm{R}}$ is the antenna gain at the PS, $\ell$ denotes the path loss at the reference distance $\SI{1}{m}$ \cite{wu2019intelligent}, $\tilde{\delta}_{m} = \sqrt{\delta_{m}^2 + 10^2}$ denotes the distance between the $m$-th device and the PS, and $\beta$ denotes the path loss exponent. We list other simulation settings in Table \ref{SimuPara}.

\begin{figure}[htbp]
    \centering
    \includegraphics[width=0.5\linewidth]{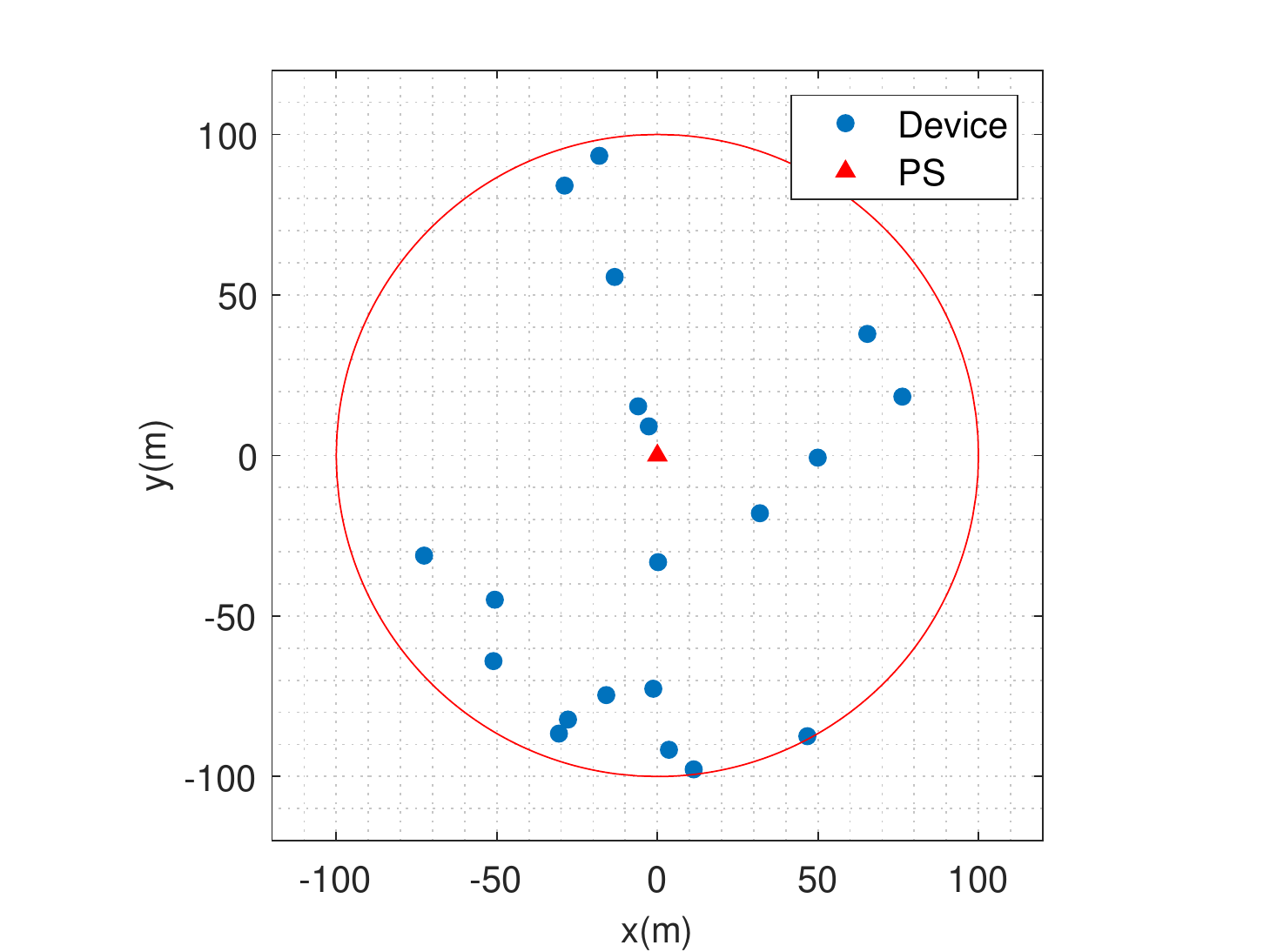}
    \caption{ A depiction of the positions of the PS and the devices in the OA-FL. }
    \label{fig:BS}
\end{figure}
\begin{table}[ht]
\caption{Simulation Parameters and Values}
\small 
\centering
\begin{tabular}{|c|c||c|c||c|c||c|c|}
\hline    
Parameter&Value &Parameter&Value &Parameter&Value &Parameter&Value\\
\hhline{|--||--||--||--|}    
$P_0$&$\SI{0.1}{W}$       &$\sigma_{\mathrm{noise}}$&$\SI{-90}{dBm}$        &$I_{\max}$&$50$      &$\Delta$&$\SI{100}{m}$     \\ 
\hline    
$\beta$&$3.8$         &$\ell$&$\SI{-60}{dB}$  &$G_{\mathrm{T}}$&$\SI{5}{dBi}$ &$G_{\mathrm{R}}$&$\SI{5}{dBi}$\\ 
\hline    
$Q$&$60,000$  &$Q_m$&$3,000$ &$\gamma$ &$\SI{1e-4}{}$ &$\epsilon$&$\SI{1e-4}{}$ \\ 
\hline 
\end{tabular}
\label{SimuPara}
\end{table}


We use a neural network that has a $5\times5$ convolution layer with $16$ feature maps, a $5\times5$ convolution layer with $32$ feature maps, and a $50$-unit fully connected layer with ReLu activation function. 
The final layer is a softmax classifier. Each convolution layer is followed by a $2\times2$ max pooling layer, a ReLu activation layer and a batch normalization layer.
The cross-entropy loss is used as the loss function. The total number of trainable parameters is $D=39604$. 
Each round of local updates consists of $10$ mini-batches with a batch size of $300$. We train our network on two datasets: MNIST\cite{lecun2010mnist} and FMNIST\cite{xiao2017fashion}. 
We generate the local datasets $\{\mathcal{A}_m\}$ on the devices in the following two ways: \textbf{i.i.d.}, where we shuffle all the data samples and assign them evenly to all the devices; \textbf{non-i.i.d.}, where we pick $4$ classes and draw $\frac{Q}{4 M}$ samples randomly from each chosen class for each device.

\subsection{Performance Comparisons}
We first present the convergence performance of Algorithm~\ref{alg:AO-ADMM} with a varying channel noise power. As shown in Fig.~\ref{fig:MSE}, the compressed gradient aggregation MSE $\sigma_{w}^{(t)}$ decreases monotonically in the iterations, and converges to a stable point within a small number of iterations on average. 
Hence, Algorithm~\ref{alg:AO-ADMM} has excellent convergence performance and fast convergence rate.
\begin{figure}[htbp]
    \centering
    \includegraphics[width = 0.75\linewidth]{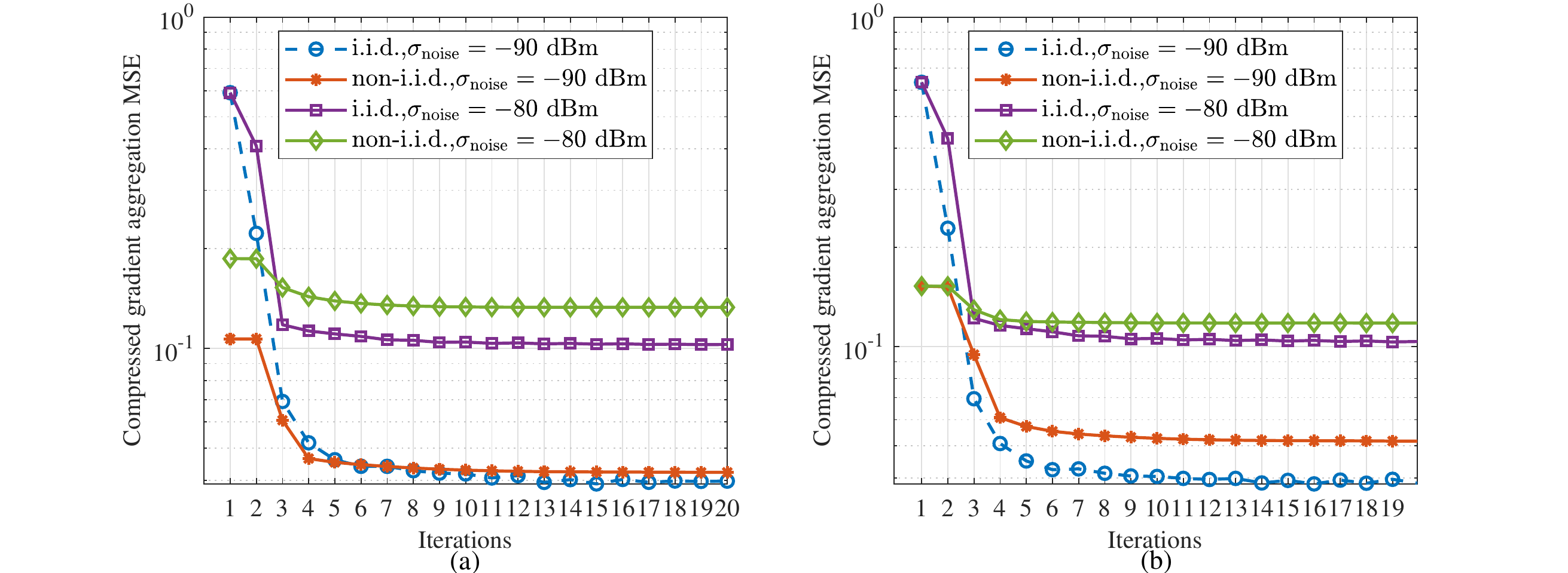}
    \caption{Compressed gradient aggregation MSE versus the number of iterations. (a): MNIST; (b): Fashion-MNIST. $M = 20$, $N_{\mathrm{S}} = 4$, $N_{\mathrm{T}} = 8$, $N_{\mathrm{R}} = 16$, $\lambda = 0.05$, and $\kappa = 0.5$. The result is averaged over $100$ communication rounds.}
    \label{fig:MSE}
\end{figure}

Then, we study the impact of the number of multiplexing streams $N_{\mathrm{S}}$ on the learning performance of the SCoM approach, where the number of transmit antennas is fixed at $N_{\mathrm{T}} = 4$, the number of receive antennas at $N_{\mathrm{R}} = 8$, the sparsity ratio at $\lambda = 0.05$, and the number of channel uses at $K = 1584$. 
The results are obtained by taking the mean of $10$ Monte Carlo simulations.
Fig.~\ref{fig:N_S} demonstrates test accuracies versus the number of multiplexing streams $N_{\mathrm{S}}$ in the case of both i.i.d. data and non-i.i.d. data on the two datasets. 
As shown in Fig.~\ref{fig:N_S}, it is seen that in the four cases, as $N_{\mathrm{S}}$ increases, the test accuracies sharply increase when $N_{\mathrm{S}} \leq N_{\mathrm{T}}$, and gradually decline when $N_{\mathrm{S}} > N_{\mathrm{T}}$. The peak of the test accuracies appears at $N_{\mathrm{S}} = N_{\mathrm{T}}$.
This is because that for a fixed number of channel uses $K$, a smaller $N_{\mathrm{S}}$ requires a smaller compression ratio $\kappa$, which results in more information loss in the sparse-coding; while a larger $N_{\mathrm{S}}$ causes a larger compressed gradient aggregation MSE $\sigma_{w}^{(t)}$ due to the bottleneck of the number of antennas.
The results are consistent with Theorem~\ref{Theorem:fixed_point} in Section-IV.
\begin{figure}[htbp]
    \centering
    \includegraphics[width = 0.8\linewidth]{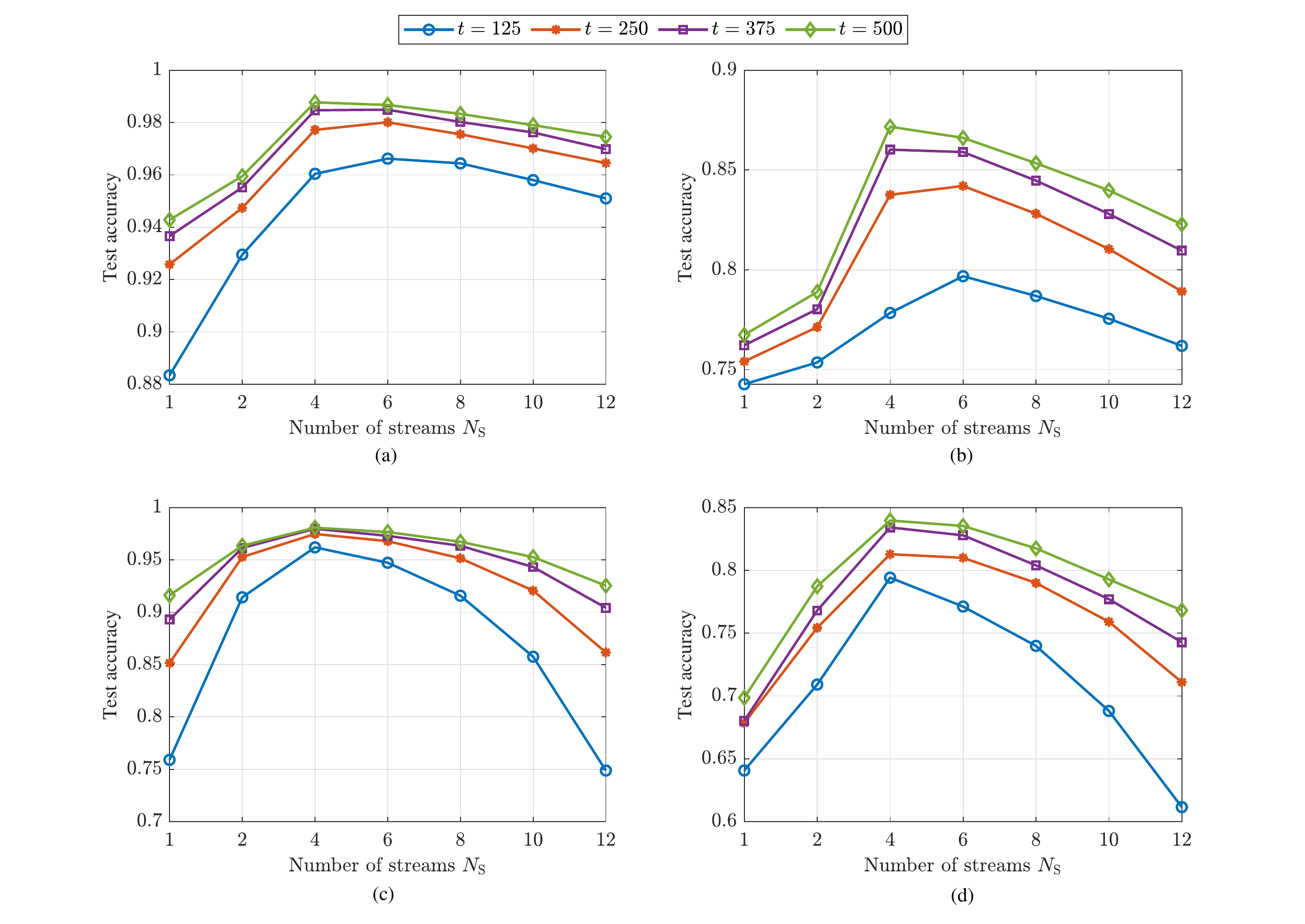}
    \caption{Test accuracy versus the number of streams on different rounds. MNIST: (a),(c); Fashion-MNIST: (b), (d); i.i.d. data: (a), (b); non-i.i.d. data: (c), (d). $M = 20$, $N_{\mathrm{T}} = 4$, $N_{\mathrm{R}} = 8$, and $K = 1584$.}
    \label{fig:N_S}
\end{figure}

We next investigate the impact of the number of receive antennas $N_{\mathrm{R}}$ at the PS on the learning performance of the SCoM approach, where the number of data streams is fixed at $N_{\mathrm{S}} = 4$ and the number of transmit antennas at $N_{\mathrm{T}} = 8$. The compression ratio $\kappa$ is set to $0.5$, the sparsity ratio $\lambda$ is set to $0.05$, and the number of channel uses is $K = 2476$. 
In Fig~\ref{fig:N_R}, it is observed that when $N_{\mathrm{R}} \leq N_{\mathrm{S}}$, a larger $N_{\mathrm{R}}$ implies a higher learning accuracy; but when $N_{\mathrm{R}} > N_{\mathrm{S}}$, the improvement on the learning accuracy by increasing $N_{\mathrm{R}}$ is neglectable. 
This is due to the fact that the case of $N_{\mathrm{R}} < N_{\mathrm{S}}$ leads to $\rank(\mathbf{H}_{m}^{(t)}) < N_{\mathrm{S}}$, which causes a large compressed gradient aggregation error; while in the case of $N_{\mathrm{R}} > N_{\mathrm{S}}$, the number of antennas is no longer the bottleneck of the system.

\begin{figure}[htbp]
    \centering
    \includegraphics[width = 0.75\linewidth]{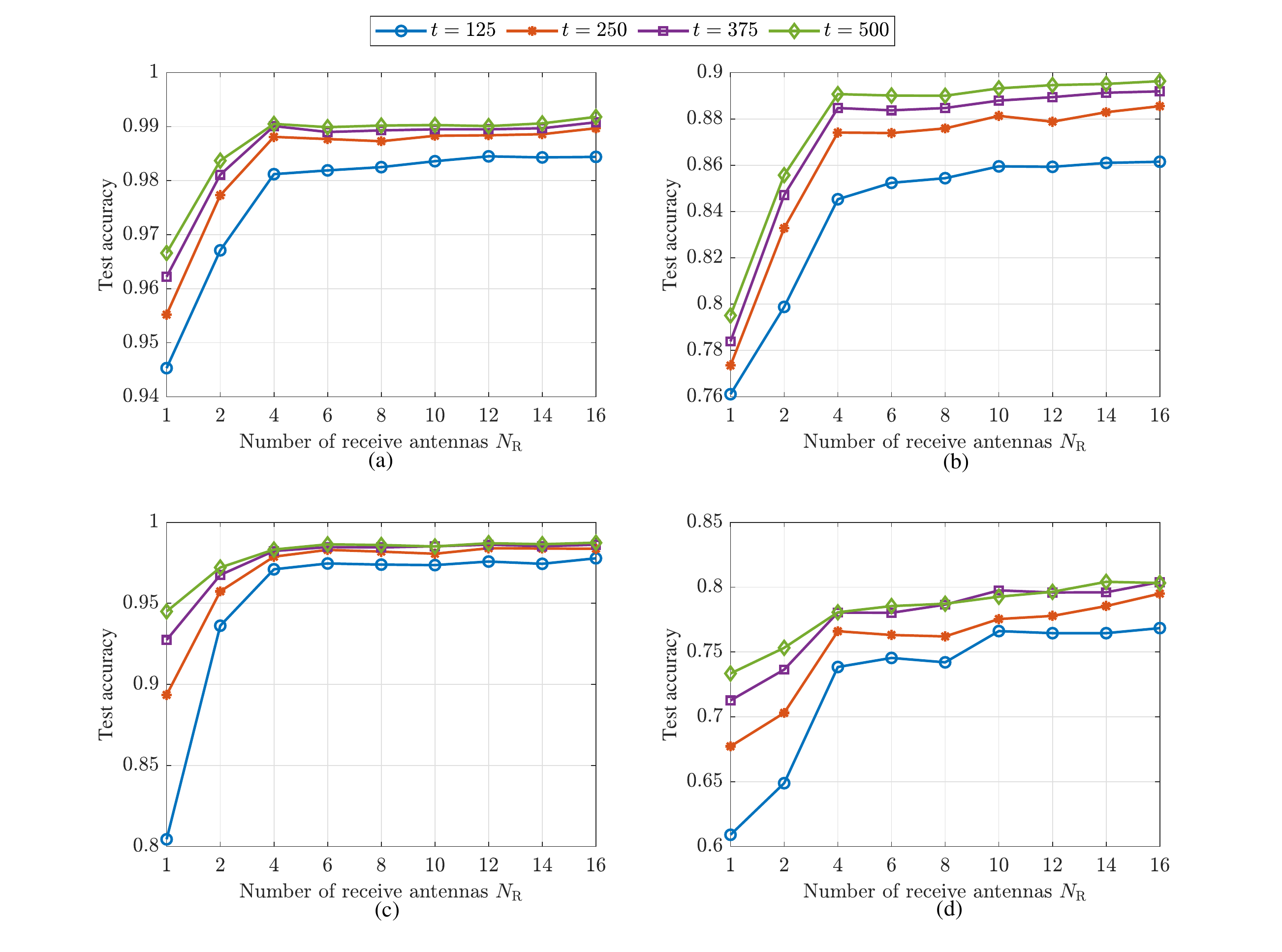}
    \caption{Test accuracy versus the number of receive antennas on different rounds. MNIST: (a),(c); Fashion-MNIST: (b), (d); i.i.d. data: (a), (b); non-i.i.d. data: (c), (d). $M = 20$, $N_{\mathrm{S}} = 4$, $N_{\mathrm{T}} = 8$, and $K = 2476$.}
    \label{fig:N_R}
\end{figure}

We now show the effect of the numbers of transmit antennas $N_{\mathrm{T}}$ at the devices on the learning performance of the SCoM approach, where the number of data streams is fixed at $N_{\mathrm{S}} = 4$ and the number of receive antennas at $N_{\mathrm{R}} = 8$. The compression ratio $\kappa$ is set to $0.5$, the sparsity ratio $\lambda$ is set to $0.05$, and the number of channel uses is $K = 2476$. 
Fig~\ref{fig:N_T} illustrates that when $N_{\mathrm{T}} < N_{\mathrm{S}}$, the test accuracy sharply increases with $N_{\mathrm{T}}$ increasing in all the four cases, since the increasing of $N_{\mathrm{T}}$ reduces the compressed gradient aggregation error. 
This is because that the case of $N_{\mathrm{T}} < N_{\mathrm{S}}$ leads to $\rank(\mathbf{H}_{m}^{(t)}) < N_{\mathrm{S}}$, which causes a large compressed gradient aggregation error.

\begin{figure}[t!]
    \centering
    \includegraphics[width = 0.75\linewidth]{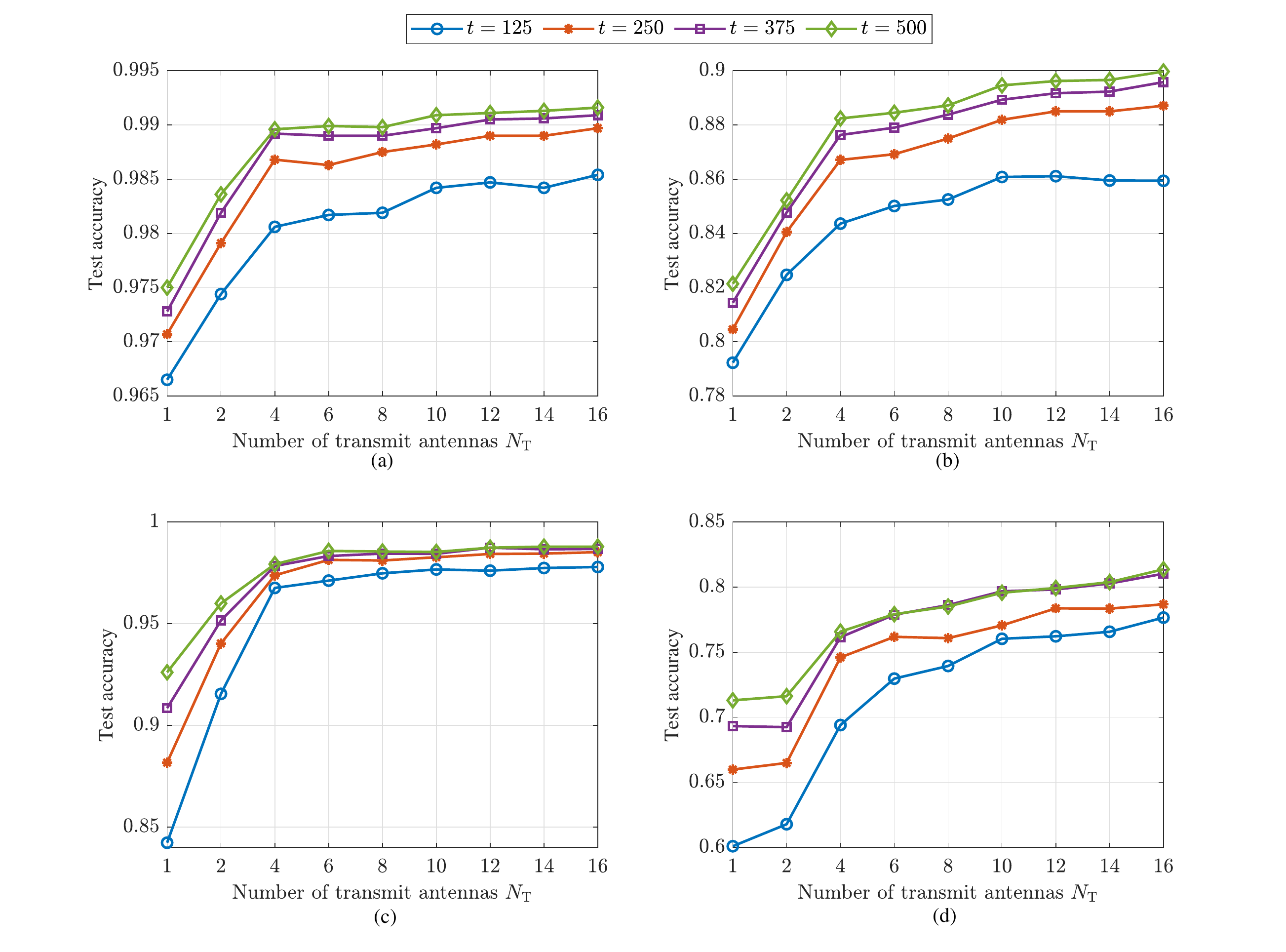}
    \caption{Test accuracy versus the number of transmit antennas on different rounds. MNIST: (a),(c); Fashion-MNIST: (b), (d); i.i.d. data: (a), (b); non-i.i.d. data: (c), (d). $M = 20$, $N_{\mathrm{S}} = 4$, $N_{\mathrm{R}} = 8$, and $K = 2476$.}
    \label{fig:N_T}
\end{figure}

To evaluate the efficiency of the SCoM method, we compare the total number of channel uses needed to train a FL task with other existing communication approaches.
There are four approaches, namely the proposed SCoM approach, zero-forcing \cite{zhu2018mimo}, antennas selection \cite{chen2018over}, and AO-based beamforming \cite{zhong2022over}. 
The first three approaches all use the MIMO multiplexing technique to transmit multiple streams in parallel, while the last one transmits a single data stream.
The numbers of antennas are set to $N_{\mathrm{T}} = 8$, $N_{\mathrm{R}} = 16$, and the number of data streams is set to $N_{\mathrm{S}} = 8$.
For the SCoM approach, the compression ratio $\kappa$ is set to $0.5$, and the sparsity ratio $\lambda$ is set to $0.05$.
Based on the above parameters, we can obtain the number of channel uses of four approaches in each round, given by $1238$, $2476$, $2476$, and $19802$, respectively.
Moreover, we denote $K(\tilde{\psi})$ as the number of channel uses needed to train the FL task until the task achieves its \textit{relative target accuracy} $\tilde{\psi}$, where $\tilde{\psi} = \frac{\psi}{\psi^{\max}} \in [0,1]$, $\psi$ is the test accuray and $\psi^{\max}$ is the maximum test accuracy.
We set $\psi_{1}^{\max} = 0.987$ for i.i.d. MNIST, $\psi_{2}^{\max} = 0.87$ for i.i.d. FMNIST, $\psi_{3}^{\max} = 0.985$ for non-i.i.d. MNIST, and $\psi_{4}^{\max} = 0.83$ for non-i.i.d. FMNIST. 

Fig.~\ref{fig:K} depicts the total number of channel uses needed by various communication approaches versus the relative target accuracy. We observe that both zero-forcing and antennas selection perform worse than the other two approaches and fail to reach $\psi^{\max}$, since they suffer incur a large compressed gradient aggregation error due to channel inversion. 
Meanwhile, compared with other baselines, the SCoM approach consumes the least number of channel uses to achieve the same learning accuracy in all cases. Compared with the AO-based beamforming, the SCoM approach only consumes $30\%\!\sim\!40\%$ of the number of channel uses. The results clearly demonstrate the outstanding performance of our proposed scheme in reducing communication overhead.

\begin{figure}[htbp]
    \centering
    \includegraphics[width = 0.75\linewidth]{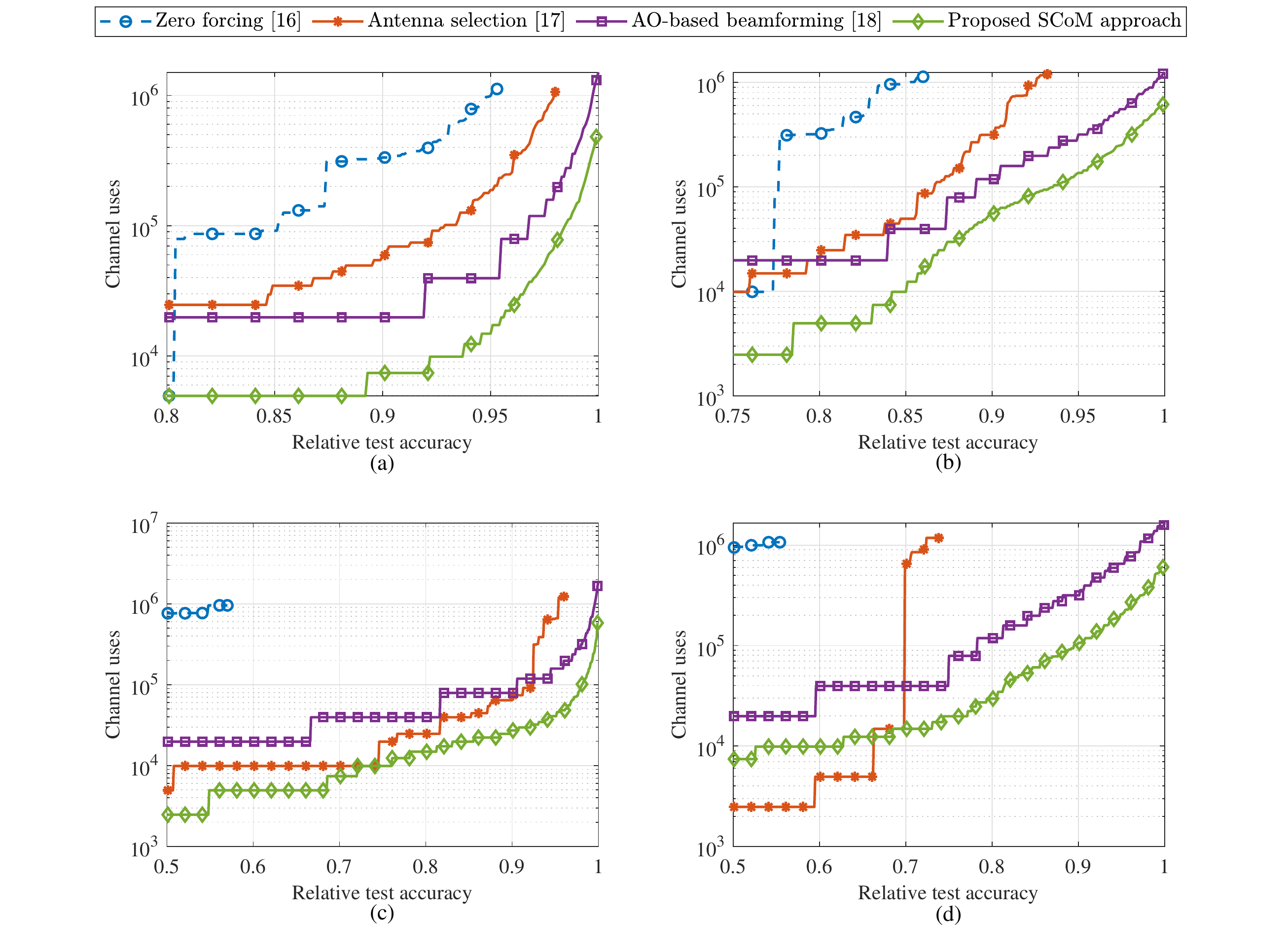}
    \caption{The number of channel uses versus relative test accuracy via different approaches. MNIST: (a),(c); Fashion-MNIST: (b), (d); i.i.d. data: (a), (b); non-i.i.d. data: (c), (d). $M = 20$, $N_{\mathrm{T}} = 8$, $N_{\mathrm{R}} = 16$, $N_{\mathrm{S}} = 8$, and $\kappa = 0.5$, $\sigma_{\mathrm{noise}} = \SI{-80}{dBm}$ for i.i.d. data, and $\sigma_{\mathrm{noise}} = \SI{-90}{dBm}$ for non-i.i.d. data.}
    \label{fig:K}
\end{figure}

\section{Conclusions}


In this paper, we proposed a novel SCoM approach based on the sparse-coding and the MIMO multiplexing to reduce the communication overhead of the gradient uploading in MIMO OA-FL. 
We derived an upper bound on the learning performance loss of the SCoM-based MIMO OA-FL scheme by characterizing the gradient aggregation error. 
Based on the analytical results, the optimal number of multiplexed data streams to minimize the upper bound on the FL learning performance loss is given by the minimum number of transmit and receive antennas. 
To minimize the gradient aggregation error, we formulated the optimization problem by optimizing the precoding matrices and the post-processing matrix. 
We proposed a low-complexity algorithm to solve the problem based on the AO framework and the ADMM algorithm.
Finally, by comparing with the state-of-the-art methods, the numerical experiments demonstrated the outstanding performance of the SCoM approach in balancing the communication overhead and the learning performance.

\appendices

\section{Proof of Lemma~\ref{Lemma:Error_sp}}
\label{Appendix:Prf_Lemma:Error_sp}
To start with, we bound $\mathbb{E}[\|\mathbf{e}_{\mathrm{sp}}^{(t)}\|_2^2]$ as 
\begin{align}
    \mathbb{E}[\|\mathbf{e}_{\mathrm{sp}}^{(t)}\|_2^2] 
    & = \mathbb{E}[\|\mathbf{g}^{(t)} - \tilde{\mathbf{g}}^{\mathrm{sp}(t)} \|_2^2] 
    \overset{\text{(a)}}{=} \mathbb{E}\bigg[ \big\|\sum\nolimits_{m=1}^M q_{m} (\mathbf{g}_{m}^{(t)} - \tilde{\mathbf{g}}_{m}^{\mathrm{sp}(t)} ) \big\|_2^2\bigg] \allowdisplaybreaks \\
    & \overset{\text{(b)}}{=} \mathbb{E}\bigg[\big\|\sum\nolimits_{m=1}^M q_{m} (\mathbf{g}_{m}^{\mathrm{ac}(t)} + \boldsymbol{\Delta}_{m}^{(t)} - \mathbf{g}_{m}^{\mathrm{ac}(t)} + \boldsymbol{\Delta}_{m}^{(t+1)}) \big\|_2^2\bigg] \allowdisplaybreaks \\
    & = \mathbb{E}\bigg[\big\|\sum\nolimits_{m=1}^M q_{m} (\boldsymbol{\Delta}_{m}^{(t)} + \boldsymbol{\Delta}_{m}^{(t+1)}) \big\|_2^2\bigg] 
    \allowdisplaybreaks \\
    & \leq 2 M \sum\nolimits_{m=1}^M q_{m} ( \mathbb{E}[\| \boldsymbol{\Delta}_{m}^{(t)} \|_2^2] + \mathbb{E}[\| \boldsymbol{\Delta}_{m}^{(t+1)} \|_2^2] )
    \label{eq:e_sp_upbound_delta}
\end{align}
where step (a) is due to $\mathbf{g}^{(t)}$ defined in \eqref{eq:IdealGradient} and $\tilde{\mathbf{g}}_{m}^{\mathrm{sp}(t)}$ defined in \eqref{eq:IdealSpGradient}, and step (b) is obtained with \eqref{eq:Accumulation} and \eqref{eq:Delta_update}.
Meanwhile, from [\citenum{amiri2020machine}, Appendix A], we have 
$ \| \boldsymbol{\Delta}_{m}^{(t)} \|_2 \leq \sum\nolimits_{\tau=0}^{t-1} \lambda^{t-\tau} \|\mathbf{g}_m^{(\tau)} \|_2 $.
Using \eqref{eq:f_bound} in Assumption~\ref{asp:f_bound}, we obtain
$\mathbb{E}[\| \boldsymbol{\Delta}_{m}^{(t)} \|_2^2 ] \leq (\chi_1 + \chi_2 \mathbb{E}[\|\nabla F(\boldsymbol{\theta}^{(t)})\|_2^{2}]) (\sum\nolimits_{\tau=0}^{t-1} \lambda^{t-\tau} )^2 $.
Using the above inequality in \eqref{eq:e_sp_upbound_delta} yields
\begin{align}
    \mathbb{E}[\|\mathbf{e}_{\mathrm{sp}}^{(t)}\|_2^2] 
    & \leq 2 M (\chi_1 + \chi_2 \mathbb{E}[\|\nabla F(\boldsymbol{\theta}^{(t)})\|_2^{2}]) \sum_{m=1}^M q_{m} \Bigg( \left( \frac{\lambda(1 - \lambda^{t+1})}{1-\lambda}\right)^2 + \left( \frac{\lambda(1 - \lambda^{t})}{1-\lambda} \right)^2 \Bigg) \notag \\
    & \leq 4 M \frac{\lambda^2}{(1-\lambda)^2} (\chi_1 + \chi_2 \mathbb{E}[\|\nabla F(\boldsymbol{\theta}^{(t)})\|_2^{2}]).
\end{align}

\section{Proof of Theorem~\ref{Theorem:Upperbound}}
\label{Appendix:Prf_Theorem:Upperbound}
By plugging \eqref{eq:bound_sp_MSE} and \eqref{eq:bound_comm_MSE} into \eqref{eq:Lemma_upper_bound}, we obtain
\begin{align}
    \mathbb{E}[F(\boldsymbol{\theta}^{(t+1)})] 
    & \leq \mathbb{E}[F(\boldsymbol{\theta}^{(t)})] - \mathbb{E}[\|\nabla F(\boldsymbol{\theta}^{(t)})\|_2^{2}] \cdot \frac{1}{\omega} \left( \frac{1}{2} - \frac{4 \chi_2 M \lambda^2}{(1-\lambda)^2}\right) + \mathcal{C}^{(t)}(\mathbf{F}^{(t)}, \{\mathbf{P}_{m}^{(t)}\}), 
    \label{eq:bound_asp_4}
\end{align}
where $\mathcal{C}^{(t)}(\mathbf{F}^{(t)}, \{\mathbf{P}_{m}^{(t)}\})$ is defined in \eqref{eq:mathcal_C}. Meanwhile, from [\citenum{friedlander2012hybrid}, eq.~(2.4)], we have 
\begin{equation}
    \mathbb{E}[\|\nabla F(\boldsymbol{\theta}^{(t)})\|_2^{2}] \geq 2 \mu \mathbb{E}[F(\boldsymbol{\theta}^{(t)}) - F(\boldsymbol{\theta}^\star)].
    \label{eq:g_lowerbound}
\end{equation}
By subtracting $F(\boldsymbol{\theta}^\star)$ on both side of \eqref{eq:bound_asp_4}, and substituting $\mathbb{E}[\|\nabla F(\boldsymbol{\theta}^{(t)})\|_2^{2}]$ with \eqref{eq:g_lowerbound}, we obtain
\begin{align}
    \mathbb{E}[F(\boldsymbol{\theta}^{(t+1)}) - F(\boldsymbol{\theta}^{\star})] 
    \leq \mathbb{E}[F(\boldsymbol{\theta}^{(t)}) - F(\boldsymbol{\theta}^{\star})]
    \Psi + \mathcal{C}^{(t)}(\mathbf{F}^{(t)}, \{\mathbf{P}_{m}^{(t)}\}), 
    \label{eq:F_dis_one_step}
\end{align}
where $\Psi$ is defined in \eqref{eq:Psi}. 
Recursively applying \eqref{eq:F_dis_one_step} for $(t+1)$ times, we obtain \eqref{eq:F_dis_t_step}.

The remaining issue is to prove \eqref{eq:F_dis_limit}. When $\Psi < 1$, we have $\Psi^{t - \tau} \mathcal{C}^{(\tau)}(\mathbf{F}^{(t)}, \{\mathbf{P}_{m}^{(t)}\}) \leq \mathcal{C}^{(t)}(\mathbf{F}^{(t)}, \{\mathbf{P}_{m}^{(t)}\})$, and $\lim_{T\rightarrow\infty}\Psi^T = 0$. Plugging the results into \eqref{eq:F_dis_t_step} yields \eqref{eq:F_dis_limit}. 

\section{Proof of Theorem~\ref{Theorem:fixed_point}}
\label{Appendix:Prf_Theorem:fixed_point}
We first prove Theorem~\ref{Theorem:fixed_point}-(i). From \eqref{eq:var_z}, we see that $z^{(t)}$ is a monotonically decreasing function of the compressed gradient aggregation MSE $\sigma_{w}^{(t)}$. 
As seen in \eqref{eq:var_v}, the increase of $z^{(t)}$ leads to the decrease of both the noise variance $1/z^{(t)}$ and $\mmse(z^{(t)})$, resulting in the increase of $v^{(t)}$.
Thus, the fixed point $v^{\star(t)}(\mathbf{F}^{(t)}, \{\mathbf{P}_{m}^{(t)}\})$ is a monotonically decreasing function of $\sigma_{w}^{(t)}$.
From \eqref{eq:mathcal_C}, $\mathcal{C}^{(t)}(\mathbf{F}^{(t)}, \{\mathbf{P}_{m}^{(t)}\})$ is also a monotonically decreasing function of $\sigma_{w}^{(t)}$.
The proof of Theorem~\ref{Theorem:fixed_point}-(ii) follows a similar logic to the proof of Theorem~\ref{Theorem:fixed_point}-(i), and we omit the details for space limitations.

We now prove Theorem~\ref{Theorem:fixed_point}-(iii). 
On one hand, when $N_{\mathrm{S}} > \min\{N_{\mathrm{R}}, N_{\mathrm{T}} \}$, $\sigma_{w}^{(t)}$ in \eqref{eq:error_mimo} significantly increases due to $\rank(\mathbf{H}_{m}^{(t)}) < N_{\mathrm{S}}$, which leads to larger values of $v^{\star(t)}(\mathbf{F}^{(t)}, \{\mathbf{P}_{m}^{(t)}\})$ and $\mathcal{C}^{(t)}(\mathbf{F}^{(t)}, \{\mathbf{P}_{m}^{(t)}\})$. 
On the other hand, when $N_{\mathrm{S}} < \min\{N_{\mathrm{R}}, N_{\mathrm{T}} \}$, a smaller $N_{\mathrm{S}}$ results in a smaller compression ratio $\kappa$ for a fixed number of channel uses $K$, which also leads to larger values of $v^{\star(t)}(\mathbf{F}^{(t)}, \{\mathbf{P}_{m}^{(t)}\})$ and $\mathcal{C}^{(t)}(\mathbf{F}^{(t)}, \{\mathbf{P}_{m}^{(t)}\})$.
Thus, the optimal $N_{\mathrm{S}}$ is given by $N_{\mathrm{S}} = \min\{N_{\mathrm{R}}, N_{\mathrm{T}} \}$. 

\bibliographystyle{IEEEtran} 
\bibliography{references}

\end{document}